\newtheorem{theorem}{Theorem}
\newtheorem{proposition}{Proposition}
\newtheorem{corollary}{Corollary}
\newtheorem{lemma}{Lemma}
\theoremstyle{definition}
\newtheorem{definition}{Definition}
\newtheorem{example}{Example}
\newtheorem{remark}{Remark}
\newcommand{\ind}[1]{\mathbbm{1}_{\left\{#1\right\}}}
\newcommand{\norm}[1]{\left|\left|#1\right|\right|}
\newcommand{\floor}[1]{\left\lfloor#1\right\rfloor}
\newcommand{\ceil}[1]{\left\lceil#1\right\rceil}
\newcommand{\map}[3]{#1 : #2 \longrightarrow #3}
\newcommand{\set}[2]{\left\{#1 : #2\right\}}
\newcommand{\flambda}{{\floor{\lambda}}}
\newcommand{\clambda}{{\ceil{\lambda}}}
\newcommand{\boldf}{\boldsymbol{f}}
\newcommand{\defeq}{\vcentcolon=}
\newcommand{\ba}{\boldsymbol{a}}
\newcommand{\bb}{\boldsymbol{b}}
\newcommand{\bd}{\boldsymbol{d}}
\newcommand{\bq}{\boldsymbol{q}}
\newcommand{\bs}{\boldsymbol{s}}
\newcommand{\bu}{\boldsymbol{u}}
\newcommand{\bv}{\boldsymbol{v}}
\newcommand{\bw}{\boldsymbol{w}}
\newcommand{\bx}{\boldsymbol{x}}
\newcommand{\by}{\boldsymbol{y}}
\newcommand{\bQ}{\boldsymbol{Q}}
\newcommand{\bV}{\boldsymbol{V}}
\newcommand{\bW}{\boldsymbol{W}}
\newcommand{\bX}{\boldsymbol{X}}
\newcommand{\bY}{\boldsymbol{Y}}
\newcommand{\bZ}{\boldsymbol{Z}}
\newcommand{\prob}{\mathbbm{P}}
\newcommand{\calA}{\mathcal{A}}
\newcommand{\calC}{\mathcal{C}}
\newcommand{\calD}{\mathcal{D}}
\newcommand{\calF}{\mathcal{F}}
\newcommand{\calJ}{\mathcal{J}}
\newcommand{\calK}{\mathcal{K}}
\newcommand{\calL}{\mathcal{L}}
\newcommand{\calN}{\mathcal{N}}
\newcommand{\calR}{\mathcal{R}}
\newcommand{\scdot}{{}\cdot{}}
\newcommand{\eq}{\textrm{eq}}
\newcommand{\N}{\mathbbm{N}}
\newcommand{\R}{\mathbbm{R}}
\newcommand{\e}{\mathrm{e}}
\newcommand{\Zp}{\N}
\pgfplotsset{
	compat = 1.16,
	every axis/.append style = {
		grid style = {dashed, gray, opacity = 0.2},
		label style = {font = \footnotesize},
		tick label style = {font = \footnotesize},  
		width = 1 * \columnwidth,
		height = 0.618 * 1 * \columnwidth
	}
}
\definecolor{britishracinggreen}{rgb}{0.0, 0.26, 0.15}
\definecolor{bostonuniversityred}{rgb}{0.8, 0.0, 0.0}
\definecolor{ceruleanblue}{rgb}{0.16, 0.32, 0.75}
\definecolor{airforceblue}{rgb}{0.36, 0.54, 0.66}
\definecolor{cadmiumgreen}{rgb}{0.0, 0.42, 0.24}
\definecolor{ao(english)}{rgb}{0.0, 0.5, 0.0}
\definecolor{coolblack}{rgb}{0.0, 0.18, 0.39}
\definecolor{alizarin}{rgb}{0.82, 0.1, 0.26}
\definecolor{arsenic}{rgb}{0.23, 0.27, 0.29}
\definecolor{cobalt}{rgb}{0.0, 0.28, 0.67}
\definecolor{amber}{rgb}{1.0, 0.75, 0.0}
\title{Self-Learning Threshold-Based Load Balancing \vspace{\baselineskip}}
\author{
\normalsize{Diego Goldsztajn, Sem C. Borst}\\ \footnotesize{Eindhoven University of Technology, d.e.goldsztajn@tue.nl, s.c.borst@tue.nl} \\
\normalsize{Johan S.H. van Leeuwaarden}\\ \footnotesize{Tilburg University, j.s.h.vanleeuwaarden@uvt.nl} \\
\normalsize{Debankur Mukherjee}\\ \footnotesize{Georgia Institute of Technology, debankur.mukherjee@isye.gatech.edu} \\
\normalsize{Philip A. Whiting}\\ \footnotesize{Macquarie University, philip.whiting@mq.edu.au} \\
}
\date{\vspace{\baselineskip} \normalsize{September 11, 2023}}
\begin{document}

\maketitle

\noindent\rule{\textwidth}{1pt}

\vspace{2\baselineskip}

\onehalfspacing

\begin{adjustwidth}{0.7cm}{0.7cm}
	\begin{center}
		\textbf{Abstract}
	\end{center}
	
	\vspace{0.3\baselineskip}
	
	\noindent We consider a large-scale service system where incoming tasks have to be instantaneously dispatched to one out of many parallel server pools. The user-perceived performance degrades with the number of concurrent tasks and the dispatcher aims at maximizing the overall quality of service by balancing the load through a simple threshold policy. We demonstrate that such a policy is optimal on the fluid and diffusion scales, while only involving a small communication overhead, which is crucial for large-scale deployments. In order to set the threshold optimally, it is important, however, to learn the load of the system, which may be unknown. For that purpose, we design a control rule for tuning the threshold in an online manner. We derive conditions which guarantee that this adaptive threshold settles at the optimal value, along with estimates for the time until this happens. In addition, we provide numerical experiments which support the theoretical results and further indicate that our policy copes effectively with time-varying demand patterns.
	
	\vspace{\baselineskip}
	
	\small{\noindent \textit{Key words:} adaptive load balancing, many-server asymptotics, fluid and diffusion limits.}
	
	\vspace{0.3\baselineskip}
	
	\small{\noindent \textit{Acknowledgment:} the work in this paper is supported by the Netherlands Organisation for Scientific Research (NWO) through Gravitation-grant NETWORKS-024.002.003, by the National Science Foundation (NSF) through Grant No. 2113027 and by the Australian Research Council Discovery Project (DP) through grant DP180103550.} 
\end{adjustwidth}

\section{Introduction}
\label{ch2-sec: introduction}

Consider a service system where incoming tasks have to be immediately routed to one out of $n$ parallel server pools. The service of tasks starts upon arrival and is independent of the number of tasks contending for service at the same server pool. Nevertheless, the portion of shared resources available to individual tasks does depend on the number of contending tasks, and in particular the experienced performance may degrade as the degree of contention rises, creating an incentive to balance the load so as to keep the maximum number of concurrent tasks across server pools as low as possible.

The latter features are characteristic of video streaming applications, such as video conferencing services. In this context, a server pool could correspond to an individual server, instantiated to handle multiple streaming tasks in parallel. The duration of tasks is typically determined by the application and is not significantly affected by the number of instances contending for the finite shared resources of the server (e.g., bandwidth). However, the video and audio quality suffer degradation as these resources get distributed among a growing number of active instances. Effective load balancing policies are thus key to optimizing the overall user experience, but the implementation of these policies must be simple enough as to not introduce significant overheads, particularly in large systems.

Suppose that the processing times of tasks are exponential with unit mean and that tasks arrive as a Poisson process of intensity $n \lambda$. Because all server pools together form an infinite-server system, the total number of tasks in steady state is Poisson with mean $n \lambda$. A natural and simple dispatching strategy is a threshold policy that gives the highest priority to server pools with less than $\floor{\lambda}$ tasks and the second highest priority to server pools with less than $\ceil{\lambda}$ tasks. In fact, we establish that a policy of this kind is optimal on the \emph{fluid scale}: the fraction of server pools with a number of tasks different from $\floor{\lambda}$ or $\ceil{\lambda}$ vanishes over time in a large $n$ regime. We further show that this policy is optimal on the more fine-grained \emph{diffusion scale} and only involves a small implementation overhead.

However, to achieve optimality, the threshold $\floor{\lambda}$ must be learned, since it depends on the demand for service, which may be unknown or even time-varying. For this purpose we introduce a control rule for adjusting the threshold in an online fashion, relying solely on the state information needed to take the dispatching decisions.

Effectively, our policy integrates online resource allocation decisions with demand estimation. While these two attributes are evidently intertwined, the online control actions and longer-term estimation rules are usually decoupled and studied separately in the literature. The former typically assume perfect knowledge of relevant system parameters while the latter tend to focus on statistical estimation of these parameters. In contrast, our policy smoothly blends these two elements and does not rely on an explicit estimate of the load $\lambda$, but yields an implicit indication as a by-product.

\subsection{Main contributions}
\label{ch2-sub: main contributions}

We analyze the threshold policy through fluid and diffusion approximations which are justified by rigorous asymptotic results, and also by means of several numerical experiments. Our main contributions are listed below in more detail.

\begin{itemize}
	\item We show that a threshold dispatching rule is optimal on the fluid and diffusion scales if the threshold is suitably chosen. Moreover, we provide a token-based implementation which involves a low communication overhead, of at most two messages per task, and two bits of memory per server pool at the dispatcher.
	
	\item The optimal threshold value depends on the load $\lambda$, which tends to be unknown or even time-varying in practice. We propose a control rule for adjusting the threshold in an online manner to an unknown load, relying solely on the tokens kept at the dispatcher. In order to analyze this rule, we provide a fluid limit for the joint evolution of the system occupancy and the adaptive threshold.
	
	\item We prove that the threshold settles in finite time in an asymptotic regime, and we provide lower and upper bounds for the equilibrium threshold. These are used to design the control rule for achieving nearly-optimal performance once the threshold has reached an equilibrium. Also, we derive an upper bound for the limit of the time until the threshold settles as $n \to \infty$.
\end{itemize}

The theoretical results are accompanied by simulations, which show that the threshold reaches an equilibrium in systems with a few hundred servers and only after a short time. Furthermore, in the presence of highly variable demand patterns, simulations indicate that the threshold adapts swiftly to demand variations.

\subsection{Related work}
\label{ch2-sub: related literature}

Load balancing problems, similar to the one addressed in the present paper, have received immense attention in the past few decades; \cite{van2018scalable} provides a recent survey. While traditionally the focus in this literature used to be on performance, more recently the implementation overhead has emerged as an equally important issue. This overhead has two sources: the communication burden of exchanging messages between the dispatcher and the servers, and the cost in large-scale deployments of storing and managing state information at the dispatcher, as considered in \cite{gamarnik2018delay}.

While this paper concerns an \emph{infinite-server} setting, the load balancing literature is predominantly focused on single-server models, where performance is generally measured in terms of queue lengths or delays. In that scenario, the Join the Shortest Queue (JSQ) policy minimizes the mean delay for exponentially distributed service times, among all non-anticipating policies; see \cite{ephremides1980simple,winston1977optimality}. However, a naive implementation of this policy involves an excessive communication burden for large systems. So-called power-of-$d$ strategies assign tasks to the shortest among $d$ randomly sampled queues, which involves substantially less communication overhead and yet provides significant improvements in delay performance over purely random routing, even for $d = 2$; see \cite{vvedenskaya1996queueing,mitzenmacher2001power,mukherjee2016universality}. A further alternative are pull-based policies, which were introduced in \cite{badonnel2008dynamic,stolyar2015pull}. These policies reduce the communication burden by maintaining state information at the dispatcher. In particular, the Join the Idle Queue (JIQ) policy studied in \cite{lu2011join,stolyar2015pull} matches the optimality of JSQ in a many-server regime, and involves only one message per task. This is achieved by storing little state information at the dispatcher, in the form of a list of idle queues.

The main differences in the delay performance of the above policies appear in heavy-traffic regimes where the load approaches one. If JSQ is the reference, then JIQ deviates from this benchmark under certain heavy-traffic conditions. This was addressed in \cite{zhou2017designing,zhou2018heavy}, which propose a refinement of JIQ designed to  achieve the same heavy-traffic delay as JSQ at the expense of only a mild increase in the communication overhead; this policy is called Join Below the Threshold (JBT). Despite similarity in name, the problem considered in these papers is fundamentally different from the one addressed in the present paper, because achieving delay optimality in a system of parallel single-server queues does not require to maintain a balanced distribution of the load. In fact, the queue lengths are not balanced if JBT is used.

The JBT policy was considered for systems of heterogeneous limited processor sharing servers with state-dependent service rates in \cite{horvath2019mean,horvath2023mean}. Such servers were studied individually in \cite{gupta2009self}, which analyzes how to set the multi-programming-limit to minimize the mean response time in a way that is robust with respect to the arrival process; this is a scheduling problem where the way in which the service rate changes with the number of tasks sharing the server is a crucial factor. In the context of purely processor sharing servers with finite buffers, \cite{jonckheere2016asymptotics} studies the loss probability of a dispatching policy that is insensitive to the distribution of task sizes. Also, \cite{comte2019dynamic} proposes a token-based insensitive policy for a system with different classes of both tasks and servers, assuming balanced service rates across the server classes.

As mentioned above, the infinite-server setting considered in this paper has received only limited attention in the load balancing literature. While queue lengths and delays are hardly meaningful in this type of scenario, load balancing still plays a crucial role in optimizing different performance measures, and many of the concepts discussed in the single-server context carry over. One relevant performance measure is the loss probability in Erlang-B scenarios; power-of-$d$ properties for these probabilities have been established in \cite{turner1998effect,mukhopadhyay2015mean,mukhopadhyay2015power,xie2015power,karthik2017choosing}. Other relevant measures are Schur-concave utility metrics associated with quality of service as perceived by streaming applications, as considered in \cite{mukherjee2020asymptotic}; these metrics are maximized by balancing the load. As in the single-server setting, JSQ is the optimal policy for evenly distributing tasks among server pools, but it involves a significant implementation burden; see \cite{menich1991optimality,sparaggis1993extremal} for proofs of the optimality of JSQ. It was established in \cite{mukherjee2020asymptotic} that the performance of JSQ can be asymptotically matched by certain power-of-$d$ strategies which reduce the communication overhead significantly, by sampling a suitably chosen number of server pools that depends on the number of tasks and dispatching tasks to the least congested of the sampled server pools.

Just like the algorithms studied in \cite{mukherjee2020asymptotic}, our threshold-based dispatching rule aims at optimizing the overall experienced performance and asymptotically matches the optimality of JSQ on the fluid and diffusion scales. Moreover, this rule involves at most two messages per task and requires that the dispatcher stores only two tokens per server pool; thus, our policy is the counterpart of JIQ in the infinite-server setup. Another token-based algorithm, for an infinite-server \emph{blocking} scenario, was briefly considered in \cite{stolyar2015pull}. While this policy minimizes the loss probability, it does not achieve an even distribution of the load and involves storing a larger number of tokens: one for each available task slot at a server pool. From a technical perspective, we use a similar methodology to derive a fluid limit in the case of a static admission threshold, but a different methodology is used when the threshold is adjusted over time.

As alluded to above, the most appealing feature of our policy is its capability of adapting the threshold value to unknown and time-varying loads. The problem of adaptation to uncertain demand patterns was addressed in the single-server setting in \cite{goldsztajn2018feedback,goldsztajn2018controlling,mukherjee2017optimal}, which remove the fixed-capacity assumption of the single-server load balancing literature and assume instead that the number of servers can be adjusted on the fly to match the load. However, in these papers the load balancing policy remains the same at all times since the \emph{right-sizing} mechanism for adjusting the number of active servers is sufficient to deal with changes in demand. Mechanisms of this kind had already been studied in the single-server setup to trade off latency and power consumption in microprocessors; see \cite{wierman2012,yao1995scheduling}.

Unlike any of the above papers, this paper considers a token-based dispatching policy for optimizing the overall quality of service in a system of parallel server pools. This policy has a self-learning capability which seamlessly adapts to unknown load values in an online manner, and additionally tracks load variations which are prevalent in practice but rarely accounted for in the load balancing literature.

\subsection{Outline of the paper}
\label{ch2-sub: organization of the paper}

The remainder of the paper is organized as follows. In Section \ref{ch2-sec: model description} we describe our model and the dispatching policy. In Section \ref{ch2-sec: optimal threshold} we establish that this policy is fluid and diffusion optimal if the threshold is suitably chosen. In Section \ref{ch2-sec: learning the optimal threshold} we analyze a control rule for adjusting the threshold to an unknown load and we explain how to tune this control rule for near-optimal performance. Simulations are reported in Section \ref{ch2-sec: simulations}. Several proofs are deferred to Appendices \ref{ch2-app: proofs of various results} and \ref{ch2-app: proof of the fluid limits}.

\section{Model and threshold policy}
\label{ch2-sec: model description}

Consider $n$ parallel and identical server pools with infinitely many servers each. Tasks arrive as a Poisson process with rate $n \lambda$ and are immediately routed to one of the server pools, where service starts at once and lasts an exponentially distributed time of unit mean. The execution times are independent of the number of tasks contending for service at the same server pool, but the quality of service experienced by tasks degrades as the degree of contention increases. Thus, maintaining an even distribution of the load is key for optimizing the overall quality of service.

More specifically, let $X_i$ denote the number of concurrent tasks at server pool $i$ and suppose that we resort to a utility metric $u(X_i) = g(1 / X_i)$ as a proxy for measuring the quality of service experienced by a task assigned to server pool $i$, as function of its resource share. Provided that $g$ is a concave and increasing function, the overall utility $\sum_{i = 1}^n X_iu(X_i)$ is a Schur-concave function of $X = (X_1, \dots, X_n)$, and is thus maximized by balancing the number of tasks among the various server pools.

The vector-valued process $X$ describing the number of tasks at each of the~$n$ server pools constitutes a continuous-time Markov chain when the dispatching decisions are based on the current number of tasks at each server pool. It is, however, more convenient to adopt an aggregate state description, denoting by $\bQ_n(i)$ the number of server pools with at least $i$ tasks, as illustrated in Figure \ref{fig: state variables}. In view of the symmetry of the model, the infinite-dimensional process $\bQ_n = \set{\bQ_n(i)}{i \geq 0}$ also constitutes a continuous-time Markov chain. We will often consider the normalized processes $\bq_n(i) = \bQ_n(i) / n$ and $\bq_n = \set{\bq_n(i)}{i \geq 0}$; the former corresponds to the fraction of server pools with at least $i$ tasks.

\begin{figure}
	\centering
	\includegraphics{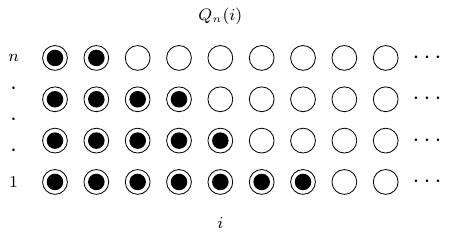}
	\caption{Schematic representation of the system state at a given time. White circles represent servers and black circles represent tasks. Each row corresponds to a server pool, and these are arranged so that the number of tasks increases from top to bottom. The number of tasks in column $i$ equals $Q_n(i)$.}
	\label{fig: state variables}
\end{figure}

\subsection{Threshold-based load balancing policy}
\label{ch2-sub: threshold policy}

All server pools together from an infinite-server system, and thus the total number of tasks in the system in steady state is Poisson distributed with mean $n\lambda$, irrespective of the load balancing policy. In order to motivate our dispatching rule, let us briefly assume that the total number of tasks in the system is actually equal to $n\lambda$ at a given time. This number of tasks would be balanced across the server pools if each server pool had either $\flambda$ or $\clambda$ tasks. This corresponds to the occupancy state $q^*$ defined as
\begin{equation}
	\label{ch2-eq: even distribution of the load}
	q^*(i) \defeq \begin{cases}
		q^*(i) = 1 & \text{if} \quad i \leq \flambda, \\
		q^*(i) = \lambda - \flambda & \text{if} \quad i = \flambda + 1, \\
		q^*(i) = 0 & \text{if} \quad i > \flambda + 1.
	\end{cases}
\end{equation}
If the dispatching rule is JSQ, then it is established in \cite{mukherjee2020asymptotic} that $\bq_n$ has a stationary distribution for each~$n$ and that these stationary distributions converge to the Dirac probability measure concentrated at~$q^*$ as $n \to \infty$. However, this comes at the expense of a significant implementation burden, as observed in Section \ref{ch2-sec: introduction}.

The total number of tasks fluctuates over time, but if tasks are dispatched in a suitable way, then it is possible that almost all the server pools have either $\flambda$ or $\clambda$ tasks most of the time and only the fraction of server pools with each of these numbers of tasks fluctuates. In order to achieve this, we propose a load balancing policy based on an admission threshold $\ell \in \N$ and the current system occupancy; for brevity we also define $h \defeq \ell + 1$. This policy operates as follows.
\begin{itemize}
	\item If $\bq_n(\ell) < 1$, then at least one server pool has strictly less than $\ell$ tasks. In this case every arriving task is assigned to a server pool with strictly less than $\ell$ tasks, selected uniformly at random.
	
	\item If $\bq_n(\ell) = 1$ and $\bq_n(h) < 1$, then all server pools have at least $\ell$ tasks and at least one server pool has exactly $\ell$ tasks. In this case each new task is sent to a server pool chosen uniformly at random among those with exactly $\ell$ tasks.
	
	\item If $\bq_n(h) = 1$, then all server pools have strictly more than $\ell$ tasks. In this case every new task is assigned to a server pool selected uniformly at random.
\end{itemize}

A natural question is if there exists a value of $\ell$ that yields an even distribution of the load. Before addressing this question, we propose a token-based implementation of the threshold-based policy. In this implementation the dispatcher stores at most two tokens per server pool, labeled \emph{green} and \emph{yellow}. For a given server pool, the dispatcher has a green token if the server pool has strictly less than $\ell$ tasks, and a yellow token if the server pool has strictly less than $h$ tasks. Note that both a green and a yellow token are stored if the server pool has strictly less than $\ell$ tasks. When a task arrives, the dispatcher uses the tokens as follows.
\begin{itemize}
	\item In the presence of green tokens, the dispatcher picks one uniformly at random and sends the task to the corresponding server pool; the token is then discarded.
	
	\item If the dispatcher only has yellow tokens, then one is chosen uniformly at random and the task is sent to the corresponding server pool; the token is then discarded.
	
	\item In the absence of tokens a server pool is selected uniformly at random.
\end{itemize}

In order to maintain accurate state information at the dispatcher, the server pools send messages with updates about their status. A server pool with exactly $h$ tasks which finishes one of its tasks will send a yellow message to the dispatcher, in order to generate a yellow token. Similarly, a server pool with exactly $\ell$ tasks, which finishes one of these tasks, will send a green message to the dispatcher to generate a green token. Green messages are also triggered by arrivals when the number of tasks in the server pool receiving the new task is still strictly less than $\ell$ after the arrival. In this way the green token discarded by the dispatcher is replaced.

With this implementation, a given task may trigger at most two messages: one upon arrival and one after leaving the system; i.e., the communication overhead is at most two messages per task. Also, the maximum amount of memory needed at the dispatcher corresponds to $2n$ tokens. In the next section we will establish that our policy is optimal on the fluid and diffusion scales for a suitable threshold. This powerful combination of optimality and low communication overhead resembles the properties of JIQ, as considered in the context of the supermarket model. 

\section{Optimality analysis}
\label{ch2-sec: optimal threshold}

In Section \ref{ch2-sub: static fluid limit} we obtain a fluid model for the threshold-based load balancing policy, based on a system of differential equations. In Section \ref{ch2-sub: fluid-optimal threshold} we use the fluid model to prove that there exist thresholds such that the solutions of the differential equations converge over time to the balanced occupancy state $q^*$. In particular, we prove that $\ell = \flambda$ has this property for all $\lambda > 0$, and is the unique threshold with this property unless $\lambda \in \N$; in the latter case $\ell = \lambda - 1$ is also fluid-optimal. In Section \ref{ch2-sub: diffusion optimality} we prove that $\ell = \flambda$ is diffusion-optimal for all $\lambda > 0$.

\subsection{Fluid limit}
\label{ch2-sub: static fluid limit}

The occupancy processes take values in
\begin{equation*}
	Q \defeq \set{q \in [0, 1]^\N}{q(i + 1) \leq q(i) \leq q(0) = 1\ \text{if}\ i \geq 1\ \text{and}\ \sum_{i = 1}^\infty q(i) < \infty} \subset \R^\N.
\end{equation*}
Recall that the above summation corresponds to the total number of tasks divided by $n$, as illustrated in Figure \ref{fig: state variables}. We endow $\R^\N$ with the product topology and we let $D_{\R^\N}[0, \infty)$ denote the space of c\`adl\`ag functions from $[0, \infty)$ into $\R^\N$, with the topology of uniform convergence over compact sets. All the occupancy processes $\bq_n$ can be constructed on a common probability space as random variables with values in $D_{\R^\N}[0, \infty)$; we outline this construction in Section \ref{ch2-sap: construction on a common probability space}. The following fluid limit holds for any threshold $\ell \in \N$, any random limiting initial condition $q_0$ and any time $T \geq 0$; the proof is provided in Section \ref{ch2-sap: static threshold}. Informally, we prove that the occupancy processes approach solutions of a system of differential equations as $n \to \infty$.

\begin{theorem}
	\label{ch2-the: static fluid limit}
	Suppose that $\bq_n(0) \to q_0$ with probability one in the product topology as $n \to \infty$. Then $\set{\bq_n}{n \geq 1}$ is almost surely relatively compact in $D_{\R^\N}[0, \infty)$. Thus, every subsequence has a further subsequence that converges. Furthermore, the limit of every convergent subsequence is a function $\map{\bq}{[0, \infty)}{Q}$ such that
	\begin{equation}
		\label{ch2-eq: static fluid limit}
		\bq(t, i) = \bq(0, i) + \int_0^t \lambda p_i(\bq(s), \ell)ds - \int_0^ti\left[\bq(s, i) - \bq(s, i + 1)\right]ds
	\end{equation}
	for all $t \geq 0$ and $i \geq 1$. The functions $p_i$ in the above equations are defined as follows.
	\begin{enumerate}
		\item[(a)] If $\bq(\ell) < 1$, then
		\begin{equation*}
			p_i(\bq, \ell) \defeq \begin{cases}
				\frac{\bq(i - 1) - \bq(i)}{1 - \bq(\ell)} & \text{if} \quad 1 \leq i \leq \ell, \\
				0 & \text{if} \quad i \geq h.
			\end{cases}
		\end{equation*}
		
		\item[(b)] If $\bq(\ell) = 1$ and $\bq(h) < 1$, then
		\begin{equation*}
			p_i(\bq, \ell) \defeq \begin{cases}
				\frac{\ell}{\lambda}\left[1 - \bq(h)\right] & \text{if} \quad i = \ell, \\
				1 - \frac{\ell}{\lambda}\left[1 - \bq(h)\right] & \text{if} \quad i = h, \\
				0 & \text{if} \quad i \neq \ell, h.
			\end{cases}
		\end{equation*}
		
		\item[(c)] If $\bq(h) = 1$, then
		\begin{equation*}
			p_i(\bq, \ell) \defeq \begin{cases}
				\frac{h}{\lambda}\left[1 - \bq(h + 1)\right] & \text{if} \quad i = h, \\
				[1 - \frac{h}{\lambda}\left(1 - \bq(h + 1)\right)]\left[\bq(i - 1) - \bq(i)\right] & \text{if} \quad i \geq h + 1, \\
				0 & \text{if} \quad 1 \leq i \leq \ell.
			\end{cases}
		\end{equation*}
	\end{enumerate}
\end{theorem}

The fluid equations \eqref{ch2-eq: static fluid limit} have a simple interpretation. Namely, the derivative of $\bq(i)$ is the rate at which new tasks arrive to server pools with exactly $i - 1$ tasks minus the rate at which tasks leave from server pools with precisely $i$ tasks. The term $i[\bq(i) - \bq(i + 1)]$ corresponds to the cumulative departure rate from server pools with exactly $i$ tasks. This quantity is equal to the total number of tasks in server pools with precisely $i$ tasks and each of these tasks has unit departure rate. The term $p_i(\bq, \ell)$ may be interpreted as the probability that a new task is assigned to a server pool with exactly $i - 1$ tasks in fluid state $\bq$ with threshold $\ell$ in force. Thus, $\lambda p_i(\bq, \ell)$ corresponds to the arrival rate of tasks to server pools with exactly $i - 1$ tasks. The expressions in (a), (b) and (c) correspond to the following situations.
\begin{enumerate}
	\item[(a)] If $\bq(\ell) < 1$, then new tasks are sent to server pools with strictly less than $\ell$ tasks, chosen uniformly at random. Hence, $p_i(\bq, \ell) = 0$ if $i \geq h$ and $p_i(\bq, \ell)$ is the fraction of server pools with exactly $i - 1$ tasks divided by the fraction of server pools with at most $\ell - 1$ tasks if $1 \leq i \leq \ell$.
	
	\item[(b)] If $\bq(\ell) = 1$, then the arrival rate to server pools with precisely $\ell - 1$ tasks must be equal to the departure rate from server pools with exactly $\ell$ tasks, which gives $p_{\ell}(\bq, \ell)$. If $\bq(h) < 1$ and all server pools have at least $\ell$ tasks, then incoming tasks are sent to server pools with exactly $\ell$ tasks. Therefore, $p_h(\bq, \ell) = 1 - p_\ell(\bq, \ell)$ and $p_i(\bq, \ell) = 0$ for all $i \neq \ell, h$.
	
	\item[(c)] If $\bq(h) = 1$, then $p_h(\bq, \ell)$ is determined since the right-hand side of \eqref{ch2-eq: static fluid limit} must be zero for $i = h$. The incoming tasks that are not sent to server pools with exactly $\ell$ tasks, are sent to server pools with $h$ or more tasks, and this happens with probability $1 - p_h(\bq, \ell)$. Since $\bq(i - 1) - \bq(i)$ is the fraction of server pools with $i - 1 \geq h$ tasks, such a server pool is selected with a probability equal to this fraction, following a uniformly random assignment.
\end{enumerate}

It is possible that $p_i(q, \ell) \notin [0, 1]$; e.g., if $\ell > \lambda$, $q(\ell) = 1$ and $q(h) = 0$. However, if $\map{\bq}{[0, \infty)}{Q}$ is a subsequential limit, then $p_i(\bq(t), \ell) \in [0, 1]$ and represents a probability for all $t$ outside a subset of $[0, \infty)$ of zero Lebesgue measure.

The proof of Theorem \ref{ch2-the: static fluid limit} uses a methodology developed in \cite{bramson1998state} to establish the almost sure relative compactness of the sequence of occupancy processes, and that the limit of each convergent subsequence is a function with Lipschitz components. The subsequential limits are then characterized by a careful analysis in neighborhoods of the points where the derivatives of all coordinate functions exist, using the stochastic dynamics of the system. The differential equation \eqref{ch2-eq: static fluid limit} results from this analysis.

\subsection{Fluid-optimal thresholds}
\label{ch2-sub: fluid-optimal threshold}

We say that $\ell$ is \emph{fluid-optimal} if all solutions $\map{\bq}{[0, \infty)}{Q}$ of \eqref{ch2-eq: static fluid limit} satisfy
\begin{equation*}
	\lim_{t \to \infty} \bq(t, i) = q^*(i) \quad \text{for all} \quad i \geq 0.
\end{equation*}
Recall that $q^*$ was defined in \eqref{ch2-eq: even distribution of the load} and corresponds to a balanced load distribution.

\begin{remark}
	Theorem \ref{ch2-the: static fluid limit} implies that solutions to \eqref{ch2-eq: static fluid limit} exist for any given threshold and initial condition. Nonetheless, we do not claim that these solutions are unique.
\end{remark}

In order to identify the fluid-optimal thresholds, we fix $\ell \in \N$ and a solution $\map{\bq}{[0, \infty)}{Q}$ of \eqref{ch2-eq: static fluid limit}. We also introduce the following functions.

\begin{definition}
	\label{ch2-def: total mass and tail functions}
	The functions $\map{\bu, \bv_j}{[0, \infty)}{\R}$ defined as
	\begin{equation*}
		\bu(t) \defeq \sum_{i = 1}^\infty \bq(t, i) \quad \text{and} \quad \bv_j \defeq \sum_{i = j}^\infty \bq(t, i) \quad \text{for all} \quad j \geq 1 
	\end{equation*}
	are called total and tail mass functions, respectively.
\end{definition}

The total mass function $\bu$ quantifies the total number of tasks in the system normalized by the number of server pools. The tail mass function $\bv_j$ has a similar interpretation if we visualize tasks as in the diagram of Figure \ref{fig: state variables}. Specifically, this function measures the total number of tasks, normalized by the number of server pools, that are located in column $j$ of the diagram or further to the right.

Next we state two lemmas that we prove in Appendix \ref{ch2-app: proofs of various results}. These technical lemmas will be used to bound the functions $\bv_h$ and $\bv_{h + 1}$.

\begin{lemma}
	\label{ch2-lem: interchange of infinite sum and derivative}
	Let $\map{\bq}{[0, \infty)}{Q}$ be a solution of \eqref{ch2-eq: static fluid limit}. Then
	\begin{equation*}
		\dot{\bv}_j = \lambda\sum_{i = j}^\infty p_i\left(\bq, \ell\right) - (j - 1)\bq(j) - \bv_j \quad \text{for all} \quad i, j \geq 1,
	\end{equation*}
	almost everywhere on $[0, \infty)$. In particular, $\dot{\bu} = \lambda - \bu$.
\end{lemma}

\begin{lemma}
	\label{ch2-lem: exponential bound}
	Let $\map{f, \varphi}{[a, b)}{\R}$ be locally integrable functions with $a < b \leq \infty$. Suppose that $f$ is absolutely continuous on each finite interval and
	\begin{align*}
		f(x) \leq f(a) + \int_a^x \left[\varphi(y) - f(y)\right]dy \quad \text{for all} \quad x \in [a, b).
	\end{align*}
	Then the following inequality holds:
	\begin{align*}
		f(x) \leq f(a)\e^{-(x - a)} + \int_a^x \varphi(y)\e^{-(x - y)}dy \quad \text{for all} \quad x \in [a, b).
	\end{align*}
\end{lemma}

The equation $\dot{\bu} = \lambda - \bu$, obtained in Lemma \ref{ch2-lem: interchange of infinite sum and derivative}, is just the fluid limit of the total number of tasks in an infinite-server system, which makes obvious sense. We conclude that $\bu(t) \to \lambda$ as $t \to \infty$ since the solution of this equation is
\begin{equation*}
	\bu(t) = \lambda + \left[\bu(0) - \lambda\right]\e^{-t} \quad \text{for all} \quad t \geq 0.
\end{equation*}

The following proposition bounds $\bv_h$ and $\bv_{h + 1}$.

\begin{proposition}
	\label{ch2-prop: asymptotic upper bounds for tail functions}
	Let $x^+ \defeq \max\{x, 0\}$. If $h > \lambda$, then there exists $t_0 \geq 0$ such that
	\begin{subequations}
		\begin{align}
			&\bv_h(t) \leq (\lambda - \ell)^+ + \left[\bv_h(t_0) - (\lambda - \ell)^+\right] \e^{-(t - t_0)}, \label{ch2-seq: upper bound for v_h} \\
			&\bv_{h + 1}(t) \leq \bv_{h + 1}(t_0) \e^{-(t - t_0)}, \label{ch2-seq: upper bound for v_h+1}
		\end{align}
	\end{subequations}
	for all $t \geq t_0$. The last inequality also holds if $h \geq \lambda$, and with $t_0 = 0$.
\end{proposition}

\begin{proof}
	By Lemma \ref{ch2-lem: interchange of infinite sum and derivative}, we have
	\begin{equation*}
		\begin{split}
			\dot{\bv}_{h + 1}(t) &= \left[\lambda - h\left(1 - \bq(t, h + 1)\right)\right]\ind{\bq(t, h) = 1} - h\bq(t, h + 1) - \bv_{h + 1}(t) \\
			&\leq \left[\lambda - h\left(1 - \bq(t, h + 1)\right)\right]^+ - h\bq(t, h + 1) - \bv_{h + 1}(t) \leq (\lambda - h)^+ - \bv_{h + 1}(t).
		\end{split}
	\end{equation*}
	If $h \geq \lambda$, then $(\lambda - h)^+ = 0$ and we get \eqref{ch2-seq: upper bound for v_h+1} for any $t_0 \geq 0$ by Lemma \ref{ch2-lem: exponential bound}.
	
	In order to prove \eqref{ch2-seq: upper bound for v_h}, let us assume that $h > \lambda$, and note that
	\begin{equation*}
		h\bq(t, h) \leq \sum_{i = 1}^h \bq(t, i) \leq \bu(t).
	\end{equation*}
	Since $\bu(t) \to \lambda$ as $t \to \infty$, there exists $t_0 \geq 0$ such that $\bq(t, h) < 1$ for all $t \geq t_0$. We conclude from \eqref{ch2-eq: static fluid limit} and Lemma \ref{ch2-lem: interchange of infinite sum and derivative} that
	\begin{equation*}
		\dot{\bv}_h(t) \leq \left[\lambda - \ell\left(1 - \bq(t, h)\right)\right]^+ - \ell \bq(t, h) - \bv_h(t) \leq (\lambda - \ell)^+ - \bv_h(t) \quad \text{for all} \quad t \geq t_0.
	\end{equation*}
	It follows from this inequality and Lemma \ref{ch2-lem: exponential bound} that \eqref{ch2-seq: upper bound for v_h} holds.
\end{proof}

A consequence of the proposition is that the fraction of server pools with more than $h$ tasks vanishes as $t \to \infty$ if $h \geq \lambda$. We also have the following theorem.

\begin{theorem}
	\label{ch2-the: optimal thresholds}
	The threshold $\ell = \flambda$ is fluid-optimal for all $\lambda > 0$. Moreover, if $\lambda \in \N$ and $\lambda > 0$, then $\ell = \lambda - 1$ is fluid-optimal as well.
\end{theorem}

\begin{proof}
	Suppose that $\ell = \flambda$. By Proposition \ref{ch2-prop: asymptotic upper bounds for tail functions}, we have
	\begin{equation*}
		\limsup_{t \to \infty} \bv_h(t) \leq \lambda - \flambda \quad \text{and} \quad \lim_{t \to \infty} \bv_{h + 1}(t) = 0.
	\end{equation*}	
	The limit superior of $\bq(t, h) = \bv_{h}(t) - \bv_{h + 1}(t)$ as $t \to \infty$ is at most $\lambda - \flambda$. Further,
	\begin{equation*}
		\lim_{t \to \infty} \left[\bq(t, h) + \sum_{i = 1}^\ell \bq(t, i)\right] = \lim_{t \to \infty} \left[\bu(t) - \bv_{h + 1}(t)\right] = \lim_{t \to \infty} \bu(t) = \lambda.
	\end{equation*}
	Since the summation on the left-hand side is at most $\ell = \flambda$ and the limit superior of $\bq(h)$ is upper bounded by $\lambda - \flambda$, we conclude that
	\begin{equation*}
		\lim_{t \to \infty} \sum_{i = 1}^\ell \bq(t, i) = \flambda \quad \text{and} \quad \lim_{t \to \infty} \bq(t, h) = \lambda - \flambda.
	\end{equation*}
	This implies that $\bq(t, i) \to q^*(i)$ as $t \to \infty$ for all $i \geq 0$. A similar argument can be applied in the case where $\lambda \in \Zp$ and the threshold is $\ell = \lambda - 1$.
\end{proof}

The following examples establish that the thresholds indicated in Theorem \ref{ch2-the: optimal thresholds} are the only fluid-optimal thresholds. For larger or smaller thresholds, the examples exhibit stationary solutions $\bq$ of \eqref{ch2-eq: static fluid limit} that are different from $q^*$. In particular, such thresholds are not fluid-optimal since $\bq(t) = \bq(0) \neq q^*$ for all $t \geq 0$.

\begin{example}[Large threshold]
	\label{ch2-ex: large threshold}
	Suppose $\ell > \flambda$ and let $\theta \in (0, 1)$ be a solution of
	\begin{equation}
		\label{ch2-eq: theta of example 1}
		\frac{\lambda}{x} = \sum_{i = 1}^{\ell} \frac{\ell!}{(\ell - i)!}\left(\frac{1 - x}{\lambda}\right)^{i - 1}.
	\end{equation}
	Such a solution always exists since the right-hand side is continuous as a function of $x \in [0, 1]$ and equals $\ell > \lambda$ at $x = 1$. Define $q \in Q$ such that $q(i) = 0$ for all $i > \ell$ and
	\begin{equation*}
		q(\ell - i) - q\left(\ell - (i - 1)\right) = \frac{\ell!}{(\ell - i)!}\left(\frac{1 - \theta}{\lambda}\right)^i\theta \quad \text{for all} \quad 0 \leq i \leq \ell.
	\end{equation*}
	In order to see that $q$ indeed lies in $Q$, observe that $q(\ell) = \theta$ and
	\begin{equation*}
		1 = \theta + \sum_{i = 1}^\ell \frac{\ell!}{(\ell - i)!}\left(\frac{1 - \theta}{\lambda}\right)^i \theta = \sum_{i = 0}^\infty \left[q(i) - q(i + 1)\right] = q(0),
	\end{equation*}
	where the first equality is due to \eqref{ch2-eq: theta of example 1}. It is possible to check that $q$ is an equilibrium point of \eqref{ch2-eq: static fluid limit}. Therefore, the constant function $\map{\bq}{[0, \infty)}{Q}$ such that $\bq(t) = q$ for all $t \geq 0$ is a stationary solution. As noted earlier, this implies that every $\ell > \flambda$ is not a fluid-optimal threshold because $q \neq q^*$.
\end{example}

The above equilibrium point $q$ corresponds to a suboptimal distribution of the load where the fraction of server pools with exactly $i$ tasks is positive for all $0 \leq i \leq \ell$, as depicted in Figure \ref{ch2-fig: suboptimal equilibria}. Note that $\pi(j) \defeq q(j) - q(j + 1)$ is the stationary distribution of an Erlang-B system with $\ell$ servers and offered traffic $\lambda / (1 - \theta)$. This is reasonable since each server pool behaves as a blocking system with $\ell$ servers except when all the server pools have at least $\ell$ tasks. The fact that the offered traffic is larger than $\lambda$ also makes sense, because server pools with less than $\ell$ tasks have a larger arrival rate when some server pools have at least $\ell$ tasks and are thus blocked. Moreover, if a fraction $1 - \theta = 1 - q(\ell)$ of the server pools receive tasks at rate $\lambda / (1 - \theta)$ and the other server pools are blocked, then the overall load is $\lambda$.

If the load is distributed according to the occupancy state $q$ of the example, then the maximum number of tasks across the server pools is $\ell > \flambda$. If the goal is to avoid concentrations of tasks, then this is near-optimal when $\ell$ is close to $\flambda$. The most problematic situations arise instead when the threshold is lower than the optimal value. In these cases all server pools have more than $\ell$ tasks most of the time, because the threshold is smaller than the load. As a result, the system resorts to uniformly random routing very often, which is known to be highly inefficient. In the large-scale limit this translates into the number of tasks across server pools being unbounded.

\begin{example}[Small threshold]
	\label{ch2-ex: small threshold}
	Suppose that $\ell < \flambda$ and $\lambda \notin \Zp$, or alternatively that $\ell < \lambda - 1$ and $\lambda \in \Zp$; note that $h < \lambda$ in either case. Also, let $\theta \in (0, 1)$ solve
	\begin{equation}
		\label{ch2-eq: theta of example 2}
		\sum_{i = h}^\infty \frac{h!}{i!}\left[\lambda - h(1 - x)\right]^{i - h}(1 - x) = 1.
	\end{equation}
	A solution always exists since the left-hand side is strictly larger than one for $x = 0$ and equal to zero for $x = 1$. We define $q \in Q$ such that $q(i) = 1$ for all $1 \leq i \leq h$ and
	\begin{equation*}
		q(i) - q(i + 1) = \frac{h!}{i!}\left[\lambda - h(1 - \theta)\right]^{i - h} (1 - \theta) \quad \text{for all} \quad i \geq h.
	\end{equation*}
	By setting $i = h$ in the last equation, we observe that $1 - \theta$ is the fraction of server pools with at most $h$ tasks. It is possible to check that $q$ is an equilibrium solution of \eqref{ch2-eq: static fluid limit}. Since $q \neq q^*$, we conclude as in Example \ref{ch2-ex: large threshold} that every threshold $\ell < \flambda$ if $\lambda \notin \N$, or $\ell < \lambda - 1$ if $\lambda \in \N$, is not fluid-optimal.
\end{example}

If the load is distributed according to the above occupancy state $q$, then all server pools have at least $h$ tasks and for each $i$ there exists a positive fraction of pools with at least $i$ tasks; this is illustrated in Figure \ref{ch2-fig: suboptimal equilibria}. Note that $\pi(j) \defeq q(j) - q(j + 1)$ is the stationary distribution of the birth-death process with death rate $j$ at state $j$ and birth rate $\lambda - h(1 - \theta)$ at each state. Informally, this can be interpreted as follows. The fraction of server pools with at least $h$ tasks equals one most of the time. This requires that tasks leaving a server pool with $\ell$ tasks are quickly replaced by new tasks, which happens at fluid rate $\lambda p_h(q, \ell) = h(1 - \theta) < \lambda$. The remaining tasks, which arrive at fluid rate $\lambda - h(1 - \theta)$, are sent to server pools that are selected uniformly at random. Therefore, server pools with at least $h$ tasks behave as independent birth-death processes with the birth rate and death rate indicated above.

\begin{figure}
	\centering
	\begin{subfigure}{0.49\columnwidth}
		\centering
		\includegraphics[width = \columnwidth]{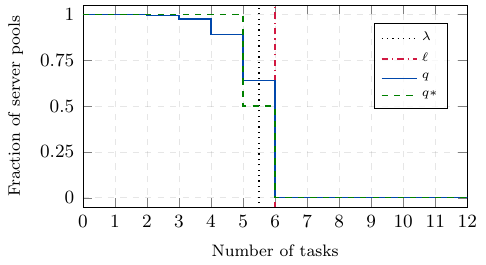}
	\end{subfigure}%
	\hfill
	\begin{subfigure}{0.49\columnwidth}
		\centering
		\includegraphics[width = \columnwidth]{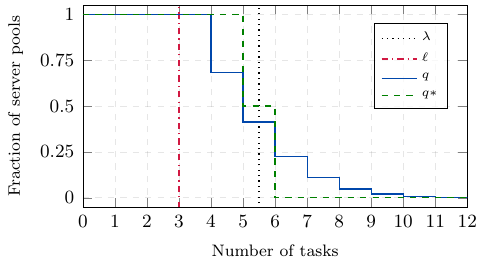}
	\end{subfigure}
	\caption{Illustrations of the equilibrium solutions $q$ computed in Examples \ref{ch2-ex: large threshold} and \ref{ch2-ex: small threshold} for $\lambda = 5.5$ and the thresholds indicated in the plots. The left plot corresponds to the case of a large threshold, whereas the right plot corresponds to the case of a small threshold.}
	\label{ch2-fig: suboptimal equilibria}
\end{figure}

The following corollary is a consequence of Theorem \ref{ch2-the: optimal thresholds} and the two examples.

\begin{corollary}
	If $\lambda \notin \Zp$, then $\flambda$ is the unique fluid-optimal threshold. If $\lambda \in \Zp$, then there are two fluid-optimal thresholds: $\lambda - 1$ and $\lambda$.
\end{corollary}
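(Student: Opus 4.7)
The plan is to combine Theorem~\ref{the: optimal thresholds} with Examples~\ref{ex: large threshold} and~\ref{ex: small threshold} to rule out every threshold other than those already declared fluid-optimal. In each excluded case, one of the two examples produces an initial condition $q(0)\in Q$ that is a fixed point of~\eqref{eq: static fluid limit} and is visibly distinct from $q^*$: in Example~\ref{ex: large threshold} the distribution of pool sizes has full support on $\{0,1,\dots,\ell\}$ with $\ell>\flambda$, whereas $q^*$ is supported on $\{\flambda,\flambda+1\}$; in Example~\ref{ex: small threshold} the distribution has infinite support, whereas $q^*$ has finite support. The constant trajectory starting from such a $q(0)$ then solves~\eqref{eq: static fluid limit} on $[0,\infty)$ but does not converge to $q^*$, which prevents $\ell$ from being fluid-optimal.

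First I would treat the generic case $\lambda\notin\Zp$. Theorem~\ref{the: optimal thresholds} establishes that $\flambda$ is fluid-optimal, so it remains to exclude every $\ell\neq\flambda$. If $\ell\geq\flambda+1$, then $\ell>\flambda$ and Example~\ref{ex: large threshold} supplies a constant non-$q^*$ solution; if $\ell\leq\flambda-1$, then $\ell<\flambda$ and the first branch of the hypothesis of Example~\ref{ex: small threshold} is satisfied, again producing a constant non-$q^*$ solution. Since every $\ell\neq\flambda$ falls into one of these two subcases, $\flambda$ is the unique fluid-optimal threshold.

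Next I would handle the integer case $\lambda\in\Zp$, where Theorem~\ref{the: optimal thresholds} already supplies two fluid-optimal thresholds, namely $\lambda-1$ and $\lambda=\flambda$. For $\ell\geq\lambda+1$ we have $\ell>\flambda$ and Example~\ref{ex: large threshold} applies as before. For $\ell\leq\lambda-2$ the second branch of the hypothesis of Example~\ref{ex: small threshold} (i.e., $\ell<\lambda-1$ with $\lambda\in\Zp$) is satisfied, so again a constant non-$q^*$ equilibrium exists. The only thresholds left unruled-out are $\lambda-1$ and $\lambda$, matching the statement.

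There is no substantive obstacle here: the heavy lifting lives in Theorem~\ref{the: optimal thresholds} and in the verification that the candidate fixed points of the two examples actually satisfy~\eqref{eq: static fluid limit}. The one point that needs a moment of care is lining up the hypotheses of Example~\ref{ex: small threshold} with the two arithmetic cases of $\lambda$; once this bookkeeping is done, Examples~\ref{ex: large threshold} and~\ref{ex: small threshold} partition the complement of $\{\flambda\}$ (respectively $\{\lambda-1,\lambda\}$) into the regimes ``threshold above the optimum'' and ``threshold strictly below the optimum'', and the corollary follows.
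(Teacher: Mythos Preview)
Your proposal is correct and matches the paper's approach exactly: the paper simply states that the corollary ``is a consequence of Theorem~\ref{the: optimal thresholds} and the two examples,'' and your case analysis spells out precisely this, using Example~\ref{ex: large threshold} for $\ell>\flambda$ and Example~\ref{ex: small threshold} (with its two hypothesis branches) for $\ell<\flambda$ when $\lambda\notin\Zp$, respectively $\ell<\lambda-1$ when $\lambda\in\Zp$.
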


\subsection{Diffusion-scale optimality}
\label{ch2-sub: diffusion optimality}

We have proved that $\ell = \flambda$ is a fluid-optimal threshold for all $\lambda > 0$. Next we further prove that the threshold-based load balancing policy has the same diffusion limit as JSQ for the latter threshold value. Recall that \cite{menich1991optimality,sparaggis1993extremal} showed that JSQ is the optimal nonanticipating load balancing policy, and that this result is nonasymptotic. Hence, the diffusion limit of JSQ is the best possible diffusion limit. This limit was derived in \cite{mukherjee2020asymptotic} and has different forms for $\lambda \notin \N$ and $\lambda \in \N$. For completeness we also state both diffusion limits here. Then we prove that the same limits hold for the threshold policy using a stochastic coupling argument.

\subsubsection{Diffusion limit for $\lambda \notin \N$}

Consider the following processes:
\begin{subequations}
	\label{ch2-eq: scaled processes in the non-integer case}
	\begin{align}
		&\bar{\bY}_n \defeq \sum_{i = 1}^\flambda \frac{n - \bQ_n(i)}{\sqrt{n}}, \\
		&\bar{\bZ}_n \defeq \frac{\bQ_n(\clambda) - n(\lambda -\flambda)}{\sqrt{n}}, \\
		&\bar{\bQ}_n(i) \defeq \frac{\bQ_n(i)}{\sqrt{n}} \quad \text{for all} \quad i > \clambda.
	\end{align}
\end{subequations}
The following theorem holds for JSQ and the threshold-based policy with $\ell = \flambda$.

\begin{theorem}
	\label{ch2-the: diffusion limit in the non-integer case}
	Suppose that $\lambda \notin \Zp$ and there exists $M \geq \clambda$ such that
	\begin{equation*}
		\left(\bar{\bY}_n(0), \bar{\bZ}_n(0), \bar{\bQ}_n(0, \clambda + 1), \dots, \bar{\bQ}_n(0, M)\right) \Rightarrow \left(0, \bar{Z}, 0, \dots, 0\right)
	\end{equation*}
	in $\R^{M - \flambda + 2}$ as $n \to \infty$ and $\bQ_n(0, M + 1) = 0$ for all large enough $n$. Then the following two statements hold.
	\begin{enumerate}
		\item[(a)] The process $\bar{\bY}_n$ converges weakly to the identically zero process and the same holds for the processes $\bar{\bQ}_n(i)$ for all $\clambda + 1 \leq i \leq M$.
		
		\item[(b)] The stochastic process $\bar{\bZ}_n$ converges weakly in $D_\R[0, \infty)$ as $n \to \infty$ to the Ornstein-Uhlenbeck process that solves $d\bZ(t) = -\bZ(t) dt + \sqrt{2\lambda}d\bW(t)$, where $\bW$ denotes a standard Wiener process.
	\end{enumerate}
\end{theorem}

This theorem corresponds to \cite[Theorem 2]{mukherjee2020asymptotic}. Informally, the theorem describes the stochastic fluctuations of a large system around the optimal occupancy state $q^*$. In particular, the number of server pools with $\clambda$ tasks is $n(\lambda - \flambda) + O(\sqrt{n})$ and only $O(\sqrt{n})$ server pools have fewer than $\flambda$ or more than $\clambda$ tasks.

\subsubsection{Diffusion limit for $\lambda \in \N$}

Consider now the processes:
\begin{subequations}
	\label{ch2-eq: scaled processes in the integer case}
	\begin{align}
		&\hat{\bY}_n \defeq \sum_{i = 1}^{\lambda - 1} \frac{n - \bQ_n(i)}{\sqrt{n}}, \\
		&\hat{\bZ}_n \defeq \sum_{i = 1}^{\lambda} \frac{n - \bQ_n(i)}{\sqrt{n}}, \\
		&\hat{\bQ}_n(i) \defeq \frac{\bQ_n(i)}{\sqrt{n}} \quad \text{for all} \quad i > \lambda.
	\end{align}
\end{subequations}
The following theorem holds for JSQ and the threshold-based policy with $\ell = \lambda$.

\begin{theorem}
	\label{ch2-the: diffusion limit in the integer case}
	Suppose that $\lambda \in \Zp$ and there exists $M \geq \lambda$ such that
	\begin{equation*}
		\left(\hat{\bY}_n(0), \hat{\bZ}_n(0), \hat{\bQ}_n(0, \lambda + 1), \dots, \hat{\bQ}_n(0, M)\right) \Rightarrow \left(0, \hat{Z}, \hat{Q}(\lambda + 1), \dots, \hat{Q}(M)\right)
	\end{equation*}
	in $\R^{M - \lambda + 2}$ as $n \to \infty$ and $\bQ_n(0, M + 1) = 0$ for all large enough $n$. The stochastic process $(\hat{\bZ}_n, \hat{\bQ}_n(\lambda + 1), \dots, \hat{\bQ}_n(M))$ converges weakly in $D_{\R^{M - \lambda + 2}}[0, \infty)$ to the unique solution of the following stochastic integral equations:
	\begin{align*}
		&\hat{\bZ}(t) = \hat{Z} + \sqrt{2\lambda}\bW(t) + \bV_1(t) - \int_0^t \left[\hat{\bZ}(s) + \lambda\hat{\bQ}(s, \lambda + 1)\right]ds, \\
		&\hat{\bQ}(t, \lambda + 1) = \hat{Q}(\lambda + 1) + \bV_1(t) - \int_0^t (\lambda + 1) \left[\hat{\bQ}(s, \lambda + 1) - \hat{\bQ}(s, \lambda + 2)\right]ds, \\
		&\hat{\bQ}(t, i) = \hat{Q}(i) - \int_0^t i\left[\hat{\bQ}(s, i) - \hat{\bQ}(s, i + 1)\right]ds \quad \text{for all} \quad i = \lambda + 2, \dots, M - 1, \\
		&\hat{\bQ}(t, M) = \hat{Q}(M) - \int_0^t M\hat{\bQ}(s, M)ds,
	\end{align*}
	where $\bW$ is a standard Wiener process and $\bV_1$ is the unique nondecreasing and nonnegative process in $D_{\R}[0, \infty)$ such that
	\begin{equation*}
		\bV_1(0) = 0 \quad \text{and} \quad \int_0^t \ind{\hat{\bZ}(s) > 0}d\bV_1(s) = 0.
	\end{equation*}
\end{theorem}

This theorem corresponds to \cite[Theorem 3]{mukherjee2020asymptotic} and is similar to the diffusion limit of JSQ derived in \cite{eschenfeldt2018join} for the single-server case. As in the $\lambda \notin \N$ case, the theorem describes the fluctuations of a large system around the optimal occupancy state $q^*$. In particular, there are $O(\sqrt{n})$ server pools with fewer than $\lambda$ or more than $\lambda$ tasks, so most of the server pools have exactly $\lambda$ tasks.

\subsection{Proof of the diffusion limits}
\label{ch2-sap: proof of the diffusion limits}

The proofs of Theorems \ref{ch2-the: diffusion limit in the non-integer case} and \ref{ch2-the: diffusion limit in the integer case} are based on a stochastic coupling over a finite interval of time between a system that uses the threshod policy and one that uses JSQ. We prove that both systems are suitably equivalent if incoming tasks are discarded when all the server pools have at least $\flambda + 1$ tasks. Then we consider nonblocking systems and we show that the probability that a task is sent to a server pool with at least $\flambda + 1$ tasks vanishes as $n \to \infty$, which implies that the nonblocking systems are equivalent in the limit. Theorems \ref{ch2-the: diffusion limit in the non-integer case} and \ref{ch2-the: diffusion limit in the integer case} follow from this fact and the corresponding diffusion limits for JSQ derived in \cite{mukherjee2020asymptotic}.

Consider two blocking systems with $n$ server pools each, such that each server pool can have at most $\flambda + 1$ tasks. The state of the system that uses JSQ is $\bQ_n^1$ and the state of the system with the threshold-based policy is $\bQ_n^2$, where the threshold is $\ell = \flambda$ for all $n$. All tasks arrive simultaneously at both systems according to a single Poisson process of intensity $n \lambda$. We also use a single Poisson process for counting the potential departures from both systems. The intensity of this process is the maximum number of tasks that the systems can have, which is $n(\flambda + 1)$.

In order to couple the departures, we define 
\begin{equation*}
	\ba_n^k \defeq \frac{\bQ_n^k(\flambda + 1)}{n} \quad \text{and} \quad \bb_n^k \defeq \frac{1}{n(\flambda + 1)}\sum_{i = 1}^{\flambda + 1} \bQ_n^k(i) \quad \text{for each} \quad k \in \{1, 2\}.
\end{equation*}
Note that $n(\flambda + 1)\ba_n^k$ is the number of tasks in server pools with exactly $\flambda + 1$ tasks, whereas the total number of tasks in the system is $n(\flambda + 1)\bb_n^k$. At each potential departure time $\tau$, we draw a single number $U \in (0, 1)$ uniformly at random. We establish whether a task leaves system $k$, and from which pool, as follows.
\begin{itemize}
	\item If $U < \ba_n^k(\tau)$, then a we select a server pool with exactly $\flambda + 1$ tasks uniformly at random and we remove one task.
	
	\item If $\ba_n^k(\tau) \leq U < \bb_n^k(\tau)$, then we select a server pool with at most $\flambda$ tasks uniformly at random and we remove one task.
	
	\item If $\bb_n^k(\tau) \leq U$, then no task leaves the system.
\end{itemize}
Since potential departures occur at rate $n(\flambda + 1)$, tasks leave system $k$ at a rate $n(\flambda + 1)\bb_n^k$ that is equal to the total number of tasks in the system. Furthermore, a task leaves a server pool with exactly $\flambda + 1$ tasks at a rate $n(\flambda + 1)\ba_n^k$ that is equal to the number of tasks in such server pools.

\begin{proposition}
	\label{ch2-prop: coupling}
	Fix $n$ and suppose that $\ell = \flambda$. The vector-valued processes
	\begin{equation*}
		\bX_n^1 \defeq \left(\sum_{i = 1}^\flambda \bQ_n^1(i), \bQ_n^1(\flambda + 1)\right) \quad \text{and} \quad \bX_n^2 \defeq \left(\sum_{i = 1}^\flambda \bQ_n^2(i), \bQ_n^2(\flambda + 1)\right)
	\end{equation*}
	have the same law if the initial conditions $\bQ_n^1(0)$ and $\bQ_n^2(0)$ have the same law.
\end{proposition}

\begin{proof}
	Using the coupling described above, we may construct the two systems on a common probability space, with the same initial conditions. We will establish that $\bX_n^1(t) = \bX_n^2(t)$ for all $t \geq 0$ with probability one. For this purpose we consider the random times $0 = \tau_0 < \tau_1 < \dots$ of arrivals or potential departures and we proceed by induction. By construction $\bX_n^1(\tau_0) = \bX_n^2(\tau_0)$. Suppose that this holds at $\tau_i$ for some $i \geq 0$ and let us prove that then it holds at $\tau_{i + 1}$ as well.
	
	First assume that $\tau_{i + 1}$ is an arrival epoch. If $\bX_n^1(\tau_i, 1) = \bX_n^2(\tau_i, 1) < n\flambda$, then both systems have some server pool with less than $\flambda$ tasks when a new tasks arrives at $\tau_{i + 1}$. Both systems assign the new task to a server pool with less than $\flambda$ tasks and only the first entries of $\bX_n^1$ and $\bX_n^2$ change, increasing by one in both cases. Suppose now that $\bX_n^1(\tau_i, 1) = \bX_n^2(\tau_i, 1) = n\flambda$ and $\bX_n^1(\tau_i, 2) = \bX_n^2(\tau_i, 2) < n$. When a new task arrives at $\tau_{i + 1}$, all the server pools in both systems are either completely full or have space for just one more task. The new task is sent to one of the idle servers in both systems and only the second entries of $\bX_n^1$ and $\bX_n^2$ change, increasing by one in both cases. In the remaining case both systems are completely full right before $\tau_{i + 1}$, so the new task has to be discarded and $\bX_n^1$ and $\bX_n^2$ do not change.
	
	Suppose now that $\tau_{i + 1}$ is a potential departure epoch. Since $\bX_n^1(\tau_i) = \bX_n^2(\tau_i)$, then $\ba_n^1(\tau_i) = \ba_n^2(\tau_i)$ and $\bb_n^1(\tau_i) = \bb_n^2(\tau_i)$. As a result, $\bX_n^1(\tau_{i + 1}) = \bX_n^2(\tau_{i + 1})$.
\end{proof}

In order to simplify the exposition, we have assumed that both systems have no server pools with more than $\flambda + 1$ tasks at all times. Nonetheless, the stochastic coupling and the proof of the proposition can be extended to a setting where server pools are allowed to have more than $\flambda + 1$ tasks at time zero and incoming tasks are discarded whenever all the server pools have at least $\flambda + 1$ tasks.

Consider the processes $(\bar{\bY}\vphantom{Y}_n^k, \bar{\bZ}\vphantom{Z}_n^k)$ and $(\hat{\bZ}\vphantom{Z}_n^k, \hat{\bQ}\vphantom{Q}_n^k(\lambda + 1))$ associatd with blocking systems where the number of tasks in a server pool may exceed $\flambda + 1$ at time zero. The law of these processes is independent of $k$. The next proposition implies that if these processes have a limit in distribution, then the same limit holds for the processes associated with nonblocking systems.

\begin{proposition}
	\label{ch2-prop: asymptotic probability of random routing}
	Suppose that the assumptions of Theorem \ref{ch2-the: diffusion limit in the non-integer case} or Theorem \ref{ch2-the: diffusion limit in the integer case} hold and fix $T > 0$. The probability that one or more tasks are dispatched to a server pool with at least $\flambda + 1$ tasks within the interval $[0, T]$ vanishes as $n \to \infty$ when JSQ or the threshold-based policy with $\ell = \flambda$ are used.
\end{proposition}

\begin{proof}
	The same arguments apply to both policies, thus we drop $k$. Let $\bX_n$ denote the total number of tasks in the system and note that Theorems \ref{ch2-the: diffusion limit in the non-integer case} and \ref{ch2-the: diffusion limit in the integer case} assume that there exists a constant $M$ such that
	\begin{equation*}
		\bX_n(0) = \sum_{i = 1}^M \bQ_n(0, i) \quad \text{for all large enough} \quad n.
	\end{equation*}
	The assumptions of the theorems further imply that $\bX_n(0) / n \Rightarrow \lambda$ as $n \to \infty$.
	
	The process $\bX_n$ represents the total number of tasks in an infinite-server system with Poisson arrivals at rate $n\lambda$ and unit-mean exponential service times. Hence, we have $\bX_n / n \Rightarrow \bx$ in $D_{\R}[0, \infty)$ as $n \to \infty$, where
	\begin{equation*}
		\dot{\bx}(t) = \lambda - \bx(t) \quad \text{for all} \quad t \geq 0 \quad \text{almost surely},
	\end{equation*}
	and $\bx(0) = \lambda$ almost surely since $\bX_n(0) / n \Rightarrow \bx(0)$. Then $\bx(t) = \lambda$ for all $t \geq 0$ with probablity one, and the above fluid limit can be expressed as follows:
	\begin{equation*}
		\lim_{n \to \infty} P\left(\sup_{t \in [0, T]} \left|\frac{\bX_n(t)}{n} - \lambda\right| \leq \varepsilon\right) = 1 \quad \text{for all} \quad \varepsilon > 0 \quad \text{and} \quad T \geq 0.
	\end{equation*}
	
	If we fix $0 < \delta < \flambda + 1 - \lambda$, then
	\begin{equation*}
		\lim_{n \to \infty} P\left(\sup_{t \in [0, T]} \bX_n(t) \leq n\left(\flambda + 1 - \delta\right)\right) = 1 \quad \text{for all} \quad T \geq 0.
	\end{equation*}
	For both policies, tasks are not sent to server pools with at least $\flambda + 1$ tasks unless all server pools have at least that number of tasks. This requires that  the total number of tasks is at least $n \left(\flambda + 1\right)$. If $\tau_n$ is the first time that a task is sent to a server pool with $\flambda + 1$ tasks or more in the system with $n$ server pools, then the inequality inside of the above probability sign implies that $\tau_n > T$. We conclude that
	\begin{equation*}
		\lim_{n \to \infty} P\left(\tau_n \leq T\right) = 0 \quad \text{for all} \quad T \geq 0.
	\end{equation*}
	This completes the proof.
\end{proof}

While tasks are not sent to server pools with at least $\flambda + 1$ tasks, the law of $\bQ_n^k(i)$ does not depend on $k$ for $i \geq \flambda + 2$ since the evolution of $\bQ_n^k(i)$ only depends on the departures from server pools with exactly $i$ tasks. Hence, the weak limits of
\begin{align*}
	&\left(\bar{\bY}_n^k, \bar{\bZ}_n^k, \bar{\bQ}_n^k\left(\flambda + 1\right), \dots, \bar{\bQ}_n^k\left(M\right)\right) \quad \text{if} \quad \lambda \notin \N, \\
	&\left(\hat{\bZ}_n^k, \bar{\bQ}_n^k\left(\lambda + 1\right), \dots, \bar{\bQ}_n^k\left(M\right)\right) \quad \text{if} \quad \lambda \in \N,
\end{align*}
do not depend on $k$. As noted earlier, Theorems \ref{ch2-the: diffusion limit in the non-integer case} and \ref{ch2-the: diffusion limit in the integer case} hold for JSQ by \cite{mukherjee2020asymptotic}, so these theorems also hold for the threshold policy with $\ell = \flambda$.

\section{Learning the optimal threshold}
\label{ch2-sec: learning the optimal threshold}

In Section \ref{ch2-sec: optimal threshold} we showed that our threshold-based policy is fluid and diffusion optimal provided that $\ell = \flambda$. However, these optimality properties critically rely on the threshold being strictly equal to $\flambda$, as was shown by Examples \ref{ch2-ex: large threshold} and \ref{ch2-ex: small threshold}. Furthermore, in actual system deployments discrepancies between an a priori chosen threshold and the optimal value $\flambda$ may occur due to the following reasons.
\begin{itemize}
	\item It can be difficult to estimate $\lambda$ in advance and a slightly inaccurate estimate may result in a wrong choice of the threshold. The worst repercussions in terms of performance occur when $\lambda$ is underestimated and a low threshold is chosen, as explained right before Example \ref{ch2-ex: small threshold}.
	
	\item The load $\lambda$ can change over time due to fluctuations in the demand. These fluctuations can result in a mismatch between $\ell$ and $\flambda$, even if $\ell = \flambda$ initially.
\end{itemize}

\begin{remark}
	We have adopted the common assumption of unit-mean task durations, which amounts to a convenient choice of time unit. In view of this, it is worth noting that the optimal threshold is determined by the offered load, rather than the arrival rate of tasks. Namely, if task durations had mean $1 / \mu$, then the optimal threshold would be $\ell = \floor{\rho}$, with $\rho \defeq \lambda / \mu$ and $n \rho = n \lambda / \mu$ the offered load. In particular, it is the offered load that has to be estimated rather than the arrival rate of tasks, which exacerbates the issues mentioned above. The results in this paper easily generalize to any service rate $\mu$ without changing the control rule to be described below, which is designed to track the offered load rather than the arrival rate of tasks.
\end{remark}

Next we introduce a control rule for adjusting the threshold over time so as to learn the optimal threshold value when $\lambda$ is unknown. We analyze this control rule through a fluid model that we describe in Section \ref{ch2-sub: fluid systems} and justify in Section \ref{ch2-sub: dynamic fluid limit} through a fluid limit. In Section \ref{ch2-sub: convergence of the threshold} we prove that the dynamic threshold of the fluid model always reaches an equilibrium, and in Section \ref{ch2-sub: parameter setting} we explain how to tune the control rule so that the equilibrium threshold is always near-optimal. Further, we show that this tuning yields an optimal equilibrium threshold in most cases. The time required for the threshold to settle is analyzed in Section \ref{ch2-sub: optimality and time until settling}.

\subsection{Learning scheme}
\label{ch2-sub: learning rule}

In order to achieve optimality, we need to actively learn the optimal threshold value. For this purpose we introduce a control rule for adjusting the threshold in an online manner. Let us denote the threshold of a system with $n$ server pools at time $t$ by $\ell_n(t)$, which is now \emph{time-dependent}, and as before, let $h_n(t) \defeq \ell_n(t) + 1$ for brevity. The control rule depends on a parameter $\alpha \in (0, 1)$ and adjusts the threshold only at arrival epochs, right after a new task has been dispatched. If an arrival occurs at time $\tau$, then the threshold is adjusted as follows.
\begin{itemize}
	\item The threshold is increased by one if the number of server pools with at least $h_n$ tasks, measured right before time $\tau$, is greater than or equal to $n - 1$.
	
	\item The threshold is decreased by one if the fraction of server pools with at least $\ell_n$ tasks, measured right before time $\tau$, is smaller than or equal to $\alpha$.
	
	\item Otherwise, the threshold remains unchanged.
\end{itemize}

Note that this control rule only relies on knowledge of the tokens that are used for dispatching the incoming tasks. Specifically, the threshold is increased if and only if the number of yellow tokens is zero when a task arrives or would be zero after dispatching the task. Also, the threshold is decreased if and only if the number of green tokens is larger than or equal to $(1 - \alpha) n$ right before an arrival.

\subsection{Fluid systems}
\label{ch2-sub: fluid systems}

Suppose $\lambda$ is unknown, either because it was not possible to estimate the offered load in advance or because it recently changed. As tasks arrive to the system, the control rule adjusts the threshold in steps of one unit, in search of the optimal value. Next we provide a fluid model for the occupancy state and dynamic threshold of the system, which will be used to establish that the threshold updates eventually cease, with the threshold reaching an equilibrium. We use the term \emph{fluid system} to refer to the occupancy state and dynamic threshold in the fluid model.

\begin{definition}
	\label{ch2-def: fluid system}
	Let $\eta \in \N \cup \{\infty\}$ and consider sequences
	\begin{equation*}
		\set{\tau_j \geq 0}{0 \leq j < \eta} \quad \text{and} \quad \set{\ell_j \in \N}{0 \leq j < \eta}
	\end{equation*}
	of strictly increasing times and thresholds, respectively. Suppose that $\tau_0 = 0$ and let
	\begin{equation*}
		\tau_\eta \defeq \infty \quad \text{if} \quad \eta < \infty \quad \text{and} \quad \tau_\eta \defeq \lim_{j \to \infty} \tau_j \quad \text{if} \quad \eta = \infty.
	\end{equation*}
	We define a piecewise constant function $\map{\ell}{[0, \tau_\eta)}{\N}$ by
	\begin{equation*}
		\ell(t) \defeq \ell_j \quad \text{for all} \quad t \in [\tau_j, \tau_{j + 1}) \quad \text{and} \quad 0 \leq j < \eta;
	\end{equation*}
	as usual, we let $h_j \defeq \ell_j + 1$ and $h(t) \defeq \ell(t) + 1$. Given $\map{\bq}{[0, \tau_\eta)}{Q}$, we say that $\bs \defeq (\bq, \ell)$ is a fluid system if the following three conditions hold.
	\begin{enumerate}
		\item[(a)] The coordinate functions $\bq(i)$ are absolutely continuous on finite intervals and
		\begin{equation}
			\label{ch2-eq: fluid dynamics revisited}
			\dot{\bq}(i) = \lambda p_i(\bq, \ell_j) - i\left[\bq(i) - \bq(i + 1)\right] \quad \text{for all} \quad i \geq 1
		\end{equation}
		almost everywhere on $[\tau_j, \tau_{j + 1})$ for all $0 \leq j < \eta$.
		
		\item[(b)] $\bq(t, \ell(t)) \geq \alpha$ for all $t \in [0, \tau_\eta)$ and $\bq(h) < 1$ almost everywhere on $[0, \tau_\eta)$.
		
		\item[(c)] $\bq\left(\tau_{j + 1}, \ell_j\right) = \alpha$ or $\bq\left(\tau_{j + 1}, h_j\right) = 1$ for all $0 \leq j < \eta - 1$. Also,
		\begin{equation*}
			\ell_{j + 1} = \ell_j - 1 \quad \text{if} \quad \bq\left(\tau_{j + 1}, \ell_j\right) = \alpha \quad \text{and} \quad \ell_{j + 1} = \ell_j + 1 \quad \text{if} \quad \bq\left(\tau_{j + 1}, h_j\right) = 1.
		\end{equation*}
	\end{enumerate}
\end{definition}

A fluid system consists of a function $\bq$, which represents the evolution of the occupancy state, and a piecewise constant function $\ell$, which represents the dynamic threshold. Between $\tau_j$ and $\tau_{j + 1}$ the threshold is constant, equal to $\ell_j$, and the system behaves according to the differential equation of Theorem \ref{ch2-the: static fluid limit}. Also, the dynamic threshold is adjusted at the times $\tau_j$ according to the control rule explained above: it increases when $\bq(h)$ reaches one and decreases when $\bq(\ell)$  reaches $\alpha$.

The possibly finite time $\tau_\eta$ accounts for the possibility of infinitely many threshold updates in finite time. We prove however that this in fact cannot happen. For this purpose we resort to the total mass function $\bu$. We have
\begin{equation}
	\label{ch2-eq: total mass}
	\bu(t) = \lambda + [\bu(0) - \lambda]\e^{-t} \quad \text{for all} \quad t \in [0, \tau_\eta)
\end{equation}
since $\dot{\bu} = \lambda - \bu$ as in Lemma \ref{ch2-lem: interchange of infinite sum and derivative}. The following proposition establishes that the threshold of a fluid system cannot change infinitely many times in finite time.

\begin{proposition}
	\label{ch2-prop: infinitely many updates cannot occur in finite time}
	All fluid systems are such that $\tau_\eta = \infty$.
\end{proposition}

\begin{proof}
	Consider a fluid system with $\eta = \infty$, otherwise $\tau_\eta = \infty$ by definition. It follows from \eqref{ch2-eq: total mass} that $\bu$ is upper bounded by some constant $M \geq 0$. Since the infinite sequence $\bq(t) \in Q$ is nonincreasing for each given $t$, we have
	\begin{equation*}
		\alpha \leq \bq(t, \ell_j) \leq \frac{1}{\ell_j} \sum_{i = 1}^{\ell_j} \bq(t, i) \leq \frac{\bu(t)}{\ell_j} \leq \frac{M}{\ell_j} \quad \text{for all} \quad t \in [\tau_j, \tau_{j + 1}) \quad \text{and} \quad j \geq 0.
	\end{equation*}
	We conclude that the set $\set{\ell_j}{j \geq 0}$ is bounded, and thus the set
	\begin{equation*}
		\calL = \set{l \in \N}{\ell_j = l\ \text{for infinitely many}\ j}
	\end{equation*}
	is nonempty and bounded.
	
	Note that there exists $j_0 \geq 0$ such that $\ell_j \leq m \defeq \max \calL$ for all $j \geq j_0$. Indeed, otherwise $\set{\ell_j}{j \geq 0}$ takes values in the finite set $\{m + 1, \dots, \floor{M / \alpha}\}$ infinitely often. This implies that the threshold takes some given value in the latter set infinitely often, which leads to a contradiction since the latter set and $\calL$ are disjoint.
	
	Fix $j > j_0$ such that $\ell_j = m$. By \eqref{ch2-eq: fluid dynamics revisited} we know that
	\begin{equation*}
		\dot{\bq}(m) = \lambda p_m(\bq, \ell_j) - m\left[\bq(m) - \bq(m + 1)\right] \geq -m
	\end{equation*}
	almost everywhere on $[\tau_j, \tau_{j + 1})$. The thresholds $\ell_{j - 1}$ and $\ell_{j + 1}$ are equal to $m - 1$ by definition of $j_0$. It follows that $\bq(\tau_j, m) = 1$ and $\bq\left(\tau_{j + 1}, m\right) = \alpha$, which implies that
	\begin{equation*}
		\alpha = \bq(\tau_{j + 1}, m) \geq \bq(\tau_j, m) - m(\tau_{j + 1} - \tau_j) = 1 - m(\tau_{j + 1} - \tau_j).
	\end{equation*}
	
	By definition, there exist infinitely many indexes $j > j_0$ such that $\ell_j = m$, and we proved that $\tau_{j + 1} - \tau_j \geq (1 - \alpha) / m$ for these indexes. Thus, $\tau_j \to \infty$ as $j \to \infty$.
\end{proof}

\subsection{Fluid limit}
\label{ch2-sub: dynamic fluid limit}

Next we provide a fluid limit that justifies using fluid systems as an asymptotic approximation for the stochastic system $\bs_n \defeq (\bq_n, \ell_n)$ when $n$ is large. Let $S[0, \infty)$ be the space of c\`adl\`ag functions from $[0, \infty)$ into $\R$, with the Skorohod $J_1$-topology. The stochastic systems $\bs_n$ take values in $D_{\R^\N}[0, \infty) \times S[0, \infty)$, and we endow this space with the product topology. As in Section \ref{ch2-sub: static fluid limit}, we may construct the stochastic systems on a common probability space for all $n$. We adopt such a construction to state the next result, which holds for any initial condition $(q_0, \ell_0)$ such that
\begin{equation*}
	q_0\left(\ell_0\right) > \alpha \quad \text{and} \quad q_0\left(h_0\right) < 1 \quad \text{almost surely},
\end{equation*}
where $h_0 \defeq \ell_0 + 1$. This condition ensures that the limiting threshold is not modified at time zero. Without it we may have sequences of systems where the threshold is modified at the first arrival, or even at each of the first $k$ arrivals for some $k > 1$.

\begin{theorem}
	\label{ch2-the: dynamic fluid limit}
	Suppose that, with probability one, $\bq_n(0) \to q_0$ in the product topology and $\ell_n(0) \to \ell_0$ as $n \to \infty$. Then $\set{\bs_n}{n \geq 1}$ is almost surely relatively compact in $D_{\R^\N}[0, \infty) \times S[0, \infty)$ and the limit of every convergent subsequence is a fluid system.
\end{theorem}

The almost sure relative compactness of $\set{\bq_n}{n \geq 1} \subset D_{\R^\N}[0, \infty)$ can be proved using the methodology of \cite{bramson1998state}, as in Section \ref{ch2-sub: static fluid limit}. If $\set{\bq_{k}}{k \in \calK}$ is a convergent subsequence, then the challenge is to show that the thresholds $\ell_{k}$ converge in $S[0, \infty)$, and to characterize the limits of $\bq_{k}$ and $\ell_{k}$ jointly. In Section \ref{ch2-sap: linear schemes} we do this by induction, approaching $\bs_n$ by systems $\bs_n^m$ where only $m$ threshold updates occur.

\subsection{Convergence of the threshold}
\label{ch2-sub: convergence of the threshold}

In Theorem \ref{ch2-the: dynamic fluid limit}, the limit of a convergent subsequence is a fluid system. Next we prove that the time-dependent threshold of a fluid system eventually reaches an equilibrium value. For this purpose we fix a fluid system $\bs = (\bq, \ell)$ and we consider the associated total mass and tail mass functions, as in Definition \ref{ch2-def: total mass and tail functions}. The next result provides upper bounds for the tail mass functions.

\begin{proposition}
	\label{ch2-prop: asymptotic upper bounds for tail functions in the dynamic case}
	Suppose that there exist $m \geq 0$ and $0 \leq a < b$ such that $\ell(t) \leq m$ if $t \in (a, b)$. Then the following inequalities hold:
	\begin{subequations}
		\label{ch2-eq: upper bound for tail functions}
		\begin{align}
			&\bv_{m + 1}(t) \leq (\lambda - m)^+ + \left[\bv_{m + 1}(a) - (\lambda - m)^+\right] \e^{-(t - a)}, \label{ch2-seq: upper bound for v_i+1} \\
			&\bv_{m + 2}(t) \leq \bv_{m + 2}(a)\e^{-(t - a)}, \label{ch2-seq: upper bound for v_h+2}
		\end{align}
	\end{subequations}
	for all $t \in [a, b]$. If in addition $\bq(t, m) < 1$ for all $t \in (a, b)$, then we have
	\begin{equation}
		\label{ch2-eq: upper bound for v_i+1 when q_l < 1}
		\bv_{m + 1}(t) \leq \bv_{m + 1}(a)\e^{-(t - a)} \quad \text{for all} \quad t \in [a, b].
	\end{equation}
\end{proposition}

\begin{proof}
	Note that $h(t) \leq m + 1$ for all $t \in (a, b)$ by assumption, and $\bq(h) < 1$ almost everywhere by Definition \ref{ch2-def: fluid system}. It follows that
	\begin{equation*}
		\lambda p_{m + 1}(\bs) \leq \left[\lambda - m\left(1 - \bq(m + 1\right)\right]^+ \quad \text{and} \quad p_i(\bs) = 0 \quad \text{for all} \quad  i > m + 1
	\end{equation*}
	almost everywhere in $(a, b)$. The proof of \eqref{ch2-eq: upper bound for tail functions} is as in Proposition \ref{ch2-prop: asymptotic upper bounds for tail functions} and \eqref{ch2-eq: upper bound for v_i+1 when q_l < 1} follows similarly, noting that $\bq(m) < 1$ implies that $p_i(\bs) = 0$ for all $i \geq m + 1$.
\end{proof}

We now prove that the threshold of a fluid system reaches an equilibrium.

\begin{theorem}
	\label{ch2-the: convergence of the threshold}
	There exist $t_{\eq} \geq 0$ and $\ell_{\eq} \in \N$ such that $\ell(t) = \ell_{\eq}$ for all $t \geq t_{\eq}$. 
\end{theorem}

\begin{proof}
	By \eqref{ch2-eq: total mass}, we have $\bu(t) < \flambda + 1$ for all $t \geq t_0$ and some $t_0 \geq 0$. Hence,
	\begin{align}
		\label{ch2-eq: obstruction for threshold increase}
		\tau_j \geq t_0 \quad \text{and} \quad \ell_j \geq \flambda \quad \text{imply} \quad \bq(t, h_j) \leq \frac{\bu(t)}{h_j} < 1 \quad \text{if} \quad t \in [\tau_j, \tau_{j + 1}).
	\end{align}
	This further implies that one of the following two events must occur: the threshold is decreased at $\tau_{j + 1} < \infty$ or no further threshold updates occur and $\tau_{j + 1} = \infty$.
	
	Suppose that $\ell(t) \geq \flambda$ for all $t \geq t_0$. The previous observation implies that $\ell$ is a nonincreasing and lower bounded function within $[t_0, \infty)$. Since $\ell$ is integer-valued, it must eventually settle at some $\ell_{\eq} \geq \flambda$.
	
	Alternatively, assume that there exists $t_1 \geq t_0$ such that $\ell(t_1) < \flambda$. Note that $\ell$ cannot increase beyond $\flambda$ after $t_1$ by \eqref{ch2-eq: obstruction for threshold increase}. Hence, $\ell(t) \leq \floor{\lambda}$ for all $t \geq t_1$. Using Proposition \ref{ch2-prop: asymptotic upper bounds for tail functions in the dynamic case} with $m = \flambda$, $a = t_1$ and $b = \infty$, we obtain
	\begin{align*}
		\bv_{\flambda + 1}(t) \leq \lambda - \flambda + \left[\bv_{\flambda + 1}(t_1) - \left(\lambda - \flambda\right)\right]\e^{-(t - t_1)} \quad \text{for all} \quad t \geq t_1.
	\end{align*}
	The right-hand side converges to $\lambda - \flambda$ as $t \to \infty$ and $\bu(t) \to \lambda$ by \eqref{ch2-eq: total mass}. As a result, there exists $t_2 \geq t_1$ such that
	\begin{equation*}
		\bu(t) - \bv_{\flambda + 1}(t) > \flambda - (1 - \alpha) \quad \text{for all} \quad t \geq t_2.
	\end{equation*}
	
	Suppose that $\tau_j \geq t_2$. Then $\ell_j \leq \flambda$ and therefore
	\begin{equation*}
		\bq(t, \ell_j) \geq \bu(t) - \left(\flambda - 1\right) - \bv_{\flambda + 1}(t) > \alpha \quad \text{for all} \quad t \in \left[\tau_j, \tau_{j + 1}\right).
	\end{equation*}
	This implies that the threshold increases at $\tau_{j + 1} < \infty$ or no further updates occur and $\tau_{j + 1} = \infty$. In the former case, $\ell$ is nondecreasing and upper bounded in $[t_2, \infty)$, and thus eventually reaches an equilibrium value $\ell_{\eq} \leq \flambda$.
\end{proof}

\subsection{Tuning of the learning scheme}
\label{ch2-sub: parameter setting}

Theorem \ref{ch2-the: convergence of the threshold} does not provide any information about the equilibrium threshold value $\ell_{\eq}$ and the time $t_{\eq}$ required for the threshold to reach equilibrium. Particularly, we would like to know how these quantities depend on $\alpha$ to set this parameter in a suitable manner. In this section we study the possible values of $\ell_{\eq}$ and in the next section we investigate the possible values of $t_{\eq}$. The following proposition provides bounds for $\ell_{\eq}$ that will be used to set $\alpha$; the proof is given in Appendix \ref{ch2-app: proofs of various results}.

\begin{proposition}
	\label{ch2-prop: bounds for leq}
	The following properties hold.
	\begin{enumerate}
		\item[(a)] If $\lambda \notin \Zp$ then $\ell_\eq \geq \flambda$, and if $\lambda \in \Zp$, then $\ell_\eq \geq \lambda - 1$.
		
		\item[(b)] We have $\ell_{\eq} \leq \lambda / \alpha$ both for $\lambda \notin \Zp$ and $\lambda \in \Zp$.
	\end{enumerate}
\end{proposition}

Suppose that an upper bound $\lambda_{\max}$ of $\lambda$ is known. We propose to set $\alpha$ such that
\begin{equation}
	\label{ch2-eq: criteria for setting alpha}
	\alpha > \frac{\lambda_{\max}}{\lambda_{\max} + 1}.
\end{equation}
The right-hand side is increasing in $\lambda_{\max}$, which implies that $\alpha > \lambda / (\lambda + 1)$ for all $\lambda \leq \lambda_{\max}$. It follows from Proposition \ref{ch2-prop: bounds for leq} that $\ell_{\eq} \leq \lambda / \alpha < \lambda + 1$ and thus $\ell_{\eq} \leq \clambda$. By (a) of the same proposition, we conclude that
\begin{equation*}
	\flambda \leq \ell_{\eq} \leq \clambda \quad \text{if} \quad \lambda \notin \N \quad \text{and} \quad \lambda - 1 \leq \ell_{\eq} \leq \lambda \quad \text{if} \quad \lambda \in \N.
\end{equation*}
In other words, $\ell_{\eq}$ is lower bounded by a fluid-optimal threshold value and differs from this value at most by one. Since the above inequalities hold as long as $\lambda \leq \lambda_{\max}$, the upper bound $\lambda_{\max}$ can be selected conservatively.

If $\ell_{\eq} \leq \clambda$, then \eqref{ch2-seq: upper bound for v_i+1} with $m = \clambda$, $a = t_{\eq}$ and $b = \infty$ implies that
\begin{equation*}
	\bv_{\clambda + 1}(t) \leq \bv_{\clambda + 1}(t_{\eq})\e^{-(t - t_{\eq})} \quad \text{for all} \quad t \geq t_{\eq}.
\end{equation*}
Hence, after the threshold reaches an equilibrium, the fraction of server pools with more than $\clambda$ tasks decays at least exponentially fast to zero. Although the system may not attain the ideal distribution of the load $q^*$defined in \eqref{ch2-eq: even distribution of the load}, the fraction of server pools with more than $\clambda$ tasks vanishes over time.

If \eqref{ch2-eq: criteria for setting alpha} holds, then $\ell_{\eq}$ is near-optimal. But the equilibrium threshold will in fact be fluid-optimal in many situations. For example, the following corollary gives a sufficient condition for fluid-optimality of the equilibrium threshold; the proof follows directly from Proposition \ref{ch2-prop: bounds for leq}. Note that we cannot use the corollary to set $\alpha$ since the sufficient condition depends on the unknown value of $\lambda$.

\begin{corollary}
	\label{ch2-cor: condition for optimal leq}
	Suppose that
	\begin{equation}
		\label{ch2-eq: optimality condition on alpha}
		\frac{\lambda}{\flambda + 1} < \alpha.
	\end{equation}
	Then $\ell_{\eq} = \flambda$ if $\lambda \notin \Zp$ and $\ell_{\eq} \in \{\lambda - 1, \lambda\}$ if $\lambda \in \Zp$.
\end{corollary}

The corollary says that fluid-optimality of the equilibrium threshold may be lost only when $\lambda$ is close enough to an integer from below. For each $\alpha$ we may find values $\lambda$ that violate \eqref{ch2-eq: criteria for setting alpha}. However, the set of such $\lambda$ decreases to the empty set as $\alpha \to 1$.

\subsection{Convergence time}
\label{ch2-sub: optimality and time until settling}

Assuming that $\lambda \notin \N$ and that the optimality condition \eqref{ch2-eq: optimality condition on alpha} holds, we now focus on the asymptotic time $t_{\eq}$ required by the learning scheme to reach an equilibrium. In particular, the next proposition provides an upper bound $\bar{t}_\eq$ for this time.

\begin{proposition}
	\label{ch2-prop: time until settling}
	Suppose $\lambda \notin \Zp$ and \eqref{ch2-eq: optimality condition on alpha} holds. If
	\begin{equation}
		\label{ch2-eq: upper bound for teq}
		t \geq \bar{t}_{\eq} \defeq \begin{cases}
			\log \left(\frac{\lambda}{\lambda - \flambda}\right) & \text{if} \quad \bu(0) \leq \lambda, \\
			\left[\log \left(\frac{\bu(0) - \lambda}{\alpha\clambda - \lambda}\right)\right]^+ + \log \left(\frac{\lambda}{\lambda - \flambda}\right) & \text{if} \quad \bu(0) > \lambda,
		\end{cases}
	\end{equation}
	then $\ell(t) = \flambda$ and $\bq(t, \flambda) = 1$. In particular, $t_{eq} \leq \bar{t}_{\eq}$.
\end{proposition}

\begin{proof}
	Similarly to \eqref{ch2-eq: obstruction for threshold increase}, we may write
	\begin{equation*}
		\bq(t, \ell_j) \leq \frac{\bu(t)}{\ell_j} = \frac{\lambda + [\bu(0) - \lambda]\e^{-t}}{\ell_j} \quad \text{for all} \quad t \in [\tau_j, \tau_{j + 1}) \quad \text{and} \quad 0 \leq j < \eta,
	\end{equation*}
	where we used \eqref{ch2-eq: total mass} for the last step. We now choose $s_0 \geq 0$ such that the right-hand side is strictly less than $\alpha$ if $\ell_j \geq \clambda$ and $t > s_0$. By \eqref{ch2-eq: optimality condition on alpha}, we may set
	\begin{equation*}
		s_0 = 0 \quad \text{if} \quad \bu(0) \leq \lambda \quad \text{and} \quad s_0 = \left[\log \left(\frac{\bu(0) - \lambda}{\alpha\clambda - \lambda}\right)\right]^+ \quad \text{if} \quad \bu(0) > \lambda.
	\end{equation*}
	Note that $\ell(t) \leq \flambda$ for all $t > s_0$ since $\ell(t) \geq \clambda$ and $t > s_0$ imply that $\bq(t, \ell(t)) < \alpha$, which contradicts Definition \ref{ch2-def: fluid system}. If $\flambda = 0$, then $\ell(t) = \flambda$ for all $t > s_0$ and $\bq(t, \flambda) = \bq(t, 0) = 1$, which completes the proof since $s_0 = \bar{t}_{\eq}$ in this case.
	
	Suppose then that $\flambda > 0$ and let $\bw \defeq \bu - \bv_{\clambda}$. By \eqref{ch2-eq: fluid dynamics revisited},
	\begin{equation*}
		\dot{\bw} = \sum_{i = 1}^\flambda \dot{\bq}(i) \geq \lambda - \flambda > 0 \quad \text{whenever} \quad \ell \leq \flambda \quad \text{and} \quad \bq(\flambda) < 1,
	\end{equation*}
	which implies that $\bw$ is nondecreasing after $s_0$. Indeed, $\ell(t) \leq \flambda$ after $s_0$, so $\bw$ is increasing if $\bq(\flambda) < 1$ and attains its maximum value if $\bq(\flambda) = 1$. Next we prove that there exists $s_1 \in [s_0, \bar{t}_{\eq}]$ such that $\bq(s_1, \flambda) = 1$. Since $\bw$ is nondecreasing within $[s_0, \infty)$, this implies that $\bq(t, \flambda) = 1$ for all $t \geq s_1$ and thus $\ell(t) = \flambda$, which establishes the claim of the proposition.
	
	We argue by contradiction. Suppose that $\bq(t, \flambda) < 1$ for all $t \in [s_0, \bar{t}_\eq]$. Then \eqref{ch2-eq: upper bound for v_i+1 when q_l < 1} holds with $m = \flambda$, $a = s_0$ and $b = \bar{t}_\eq$. This implies that
	\begin{equation*}
		\bw\left(\bar{t}_{\eq}\right) = \bu\left(\bar{t}_{\eq}\right) - \bv_{\clambda}\left(\bar{t}_{\eq}\right) \geq \lambda + \left[\bu(s_0) - \lambda\right]\e^{- \left(\bar{t}_{\eq} - s_0\right)} - \bv_{\clambda}(s_0)\e^{- \left(\bar{t}_{\eq} - s_0\right)} \geq \flambda,
	\end{equation*}
	which contradicts the assumption $\bq\left(\bar{t}_\eq, \flambda\right) < 1$.
\end{proof}

The expressions for $\bar{t}_{\eq}$ provided in \eqref{ch2-eq: upper bound for teq} consist of two terms: the first one upper bounds the time until the threshold falls and remains below $\flambda$ and the second one accounts for the additional amount of time until the threshold reaches $\flambda$ and settles. The first term is zero when $\bu(0) \leq \lambda$. In this case the threshold can never exceed $\flambda$ since $\bu \leq \lambda$; here the expression in \eqref{ch2-eq: upper bound for teq} corresponds to the time required for the total mass function $\bu$ to reach $\flambda$ when $\bu(0) = 0$. Furthermore, \eqref{ch2-eq: upper bound for teq} is tight when $\bu(0) = 0$. If $\bu(0) > \lambda$, then both terms in \eqref{ch2-eq: upper bound for teq} are positive. The first one increases with the initial total mass as one would expect, and more interestingly also depends on $\alpha\clambda - \lambda$. Informally, if the fractional part of $\lambda$ is large, then it might take long for $\bq(\clambda)$ to drop below $\alpha$ so that a threshold update from $\clambda$ to $\flambda$ occurs. The second term is the same as when $\bu(0) \leq \lambda$ and could possibly be reduced.

The following corollary uses the upper bound $\bar{t}_\eq$ to summarize the asymptotic optimality properties of our policy when $\lambda \notin \N$ and \eqref{ch2-eq: optimality condition on alpha} holds. Broadly speaking, the threshold settles at the optimal value $\flambda$ before $\bar{t}_\eq$ in all large enough systems and the occupancy state approaches $q^*$ at least exponentially fast over time. The proof follows directly from Theorem \ref{ch2-the: dynamic fluid limit} and Propositions \ref{ch2-prop: asymptotic upper bounds for tail functions in the dynamic case} and \ref{ch2-prop: time until settling}.

\begin{corollary}
	\label{ch2-the: main optimality result}
	Suppose $\lambda \notin \Zp$ and \eqref{ch2-eq: optimality condition on alpha} holds. There exists $c > 0$ such that
	\begin{align*}
		&\lim_{n \to \infty} \sup_{t \in [\bar{t}_\eq, T]} |\ell_n(t) - \flambda| = 0, \\
		&\lim_{n \to \infty} \sup_{t \in [\bar{t}_\eq, T]} |\bq_n(t, i) - q^*(i)| = 0 \quad \text{for all} \quad i \leq \flambda, \\
		&\limsup_{n \to \infty} \sup_{t \in [\bar{t}_\eq, T]} |\bq_n(t,i) - q^*(i)|\e^{t - \bar{t}_\eq} \leq c \quad \text{for all} \quad i \geq \clambda,
	\end{align*}
	almost surely for all $T \geq \bar{t}_{\eq}$. Also, $c$ may be expressed in terms of $\alpha$, $\lambda$ and $u(0)$.
\end{corollary}

Since the threshold takes values in $\N$, the first limit in the above corollary is equivalent to $\ell_n(t) = \flambda$ for all $t \in [\bar{t}_\eq, T]$ and all sufficiently large $n$.

\section{Simulations}
\label{ch2-sec: simulations}

In this section we analyze the threshold-based dispatching rule and the learning scheme through simulations. First we evaluate whether the threshold indeed reaches an equilibrium value, and the amount of time required for this. Then we analyze the distribution of the number of tasks when the threshold has the optimal value, and we compare with other load balancing policies. Finally, we assess the performance of our threshold-based policy when the arrival rate of tasks is highly variable.

\subsection{Convergence of the threshold}

First we study how large $n$ must be so that $\ell_n$ reaches an equilibrium, as stated in Theorem \ref{ch2-the: convergence of the threshold} for the fluid limit. Figure \ref{ch2-fig: oscillations} shows trajectories of the occupancy and threshold processes for systems with different numbers of server pools. In the system with $n = 100$,  $\ell_n$ oscillates between $\flambda - 1$ and $\clambda$. In the system with $n = 400$, the threshold stays at $\flambda$ most of the time, with sporadic and brief excursions to $\clambda$. The excursions disappear when $n = 500$. This is not shown in Figure \ref{ch2-fig: oscillations}, but can be checked in the other simulations presented in this section.

\begin{figure}
	\centering
	\begin{subfigure}{0.49\columnwidth}
		\centering
		\includegraphics[width = \columnwidth]{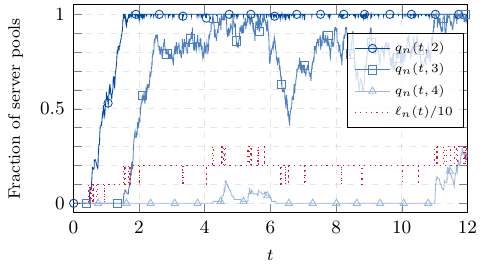}
	\end{subfigure}%
	\hfill
	\begin{subfigure}{0.49\columnwidth}
		\centering
		\includegraphics[width = \columnwidth]{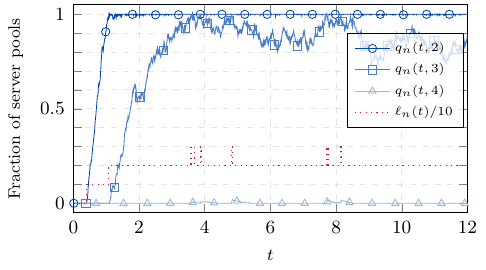}
	\end{subfigure}
	\caption{Evolution of $\ell_n$ over time for $\lambda = 2.9$ and $\alpha = 0.97$. On the left, $n = 100$ and the threshold fluctuates. On the right, $n = 400$ and the threshold stays at $\flambda$ most of the time.}
	\label{ch2-fig: oscillations}
\end{figure}

The convergence of the threshold depends on the fractional part of $\lambda$ besides the number of server pools $n$. If $\ell_n = \flambda$, then Theorem \ref{ch2-the: diffusion limit in the non-integer case} suggests that $\bq_n(\clambda)$ oscillates around $\lambda - \flambda$ with deviations of order $1 / \sqrt{n}$. If $\lambda - \flambda$ is large, then $n$ must also be large so that $\bq_n(\clambda)$ is unlikely to reach one, making the threshold increase. The fractional part of $\lambda$ is relatively large in Figure \ref{ch2-fig: oscillations}, and we see that the threshold is not completely stable at $\flambda$ for $n = 400$.

\subsubsection{Time to reach equilibrium}

We now evaluate the upper bound \eqref{ch2-eq: upper bound for teq} for the time until the threshold settles at the optimal value. Figure \ref{ch2-fig: time until settling} shows trajectories of systems with different initial conditions, where \eqref{ch2-eq: optimality condition on alpha} holds. In both cases $\ell_n$ reaches an equilibrium value quickly, in less than the average amount of time required to execute three tasks. In the initially empty system $\ell_n$ settles at $\flambda$ almost exactly at $\bar{t}_{\eq}$, but in the initially overloaded system the threshold reaches an equilibrium value several units of time before $\bar{t}_\eq$.

\begin{figure}
	\centering
	\begin{subfigure}{0.49\columnwidth}
		\centering
		\includegraphics[width = \columnwidth]{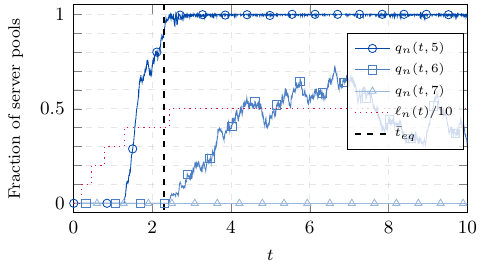}
	\end{subfigure}
	\hfill
	\begin{subfigure}{0.49\columnwidth}
		\centering
		\includegraphics[width = \columnwidth]{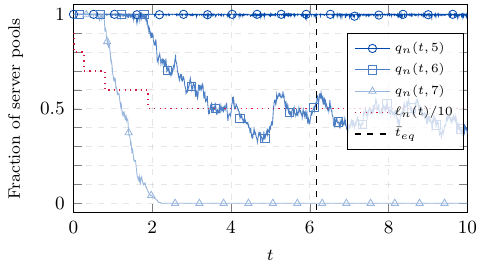}
	\end{subfigure}
	\caption{Time until the threshold reaches an equilibrium for $\lambda = 5.5$, $\alpha = 0.93$ and $n = 500$. The system on the left is initially empty and \eqref{ch2-eq: upper bound for teq} is tight. For the system on the right, all the server pools have $9$ tasks initially and we observe that the upper bound \eqref{ch2-eq: upper bound for teq} is loose.}
	\label{ch2-fig: time until settling}
\end{figure}

\subsection{Distribution of the load}

We now evaluate the distribution of the number of tasks across the server pools in steady state, which has an impact on the quality  of service experienced by users. For this purpose we ran long simulations for several load balancing policies, and we computed the empirical distribution of the fraction of resources received by an arbitrary task. We assumed that the resources of each server pool were equitably distributed among the tasks sharing it, and we computed at each instant of time the number of tasks receiving a certain fraction of resources. We then integrated these quantities over time and normalized them, as shown in Figure \ref{ch2-fig: histogram}.

\begin{figure}
	\centering
	\begin{minipage}{0.49\textwidth}
		\centering
		\includegraphics[width = \columnwidth]{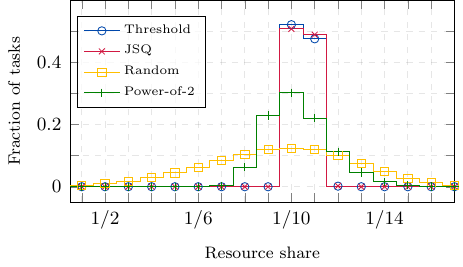}
		\caption{Resources received by tasks under different dispatching rules. All the policies were simulated with $\lambda = 10.5$ and $n = 500$.}
		\label{ch2-fig: histogram}
	\end{minipage}\hfill
	\begin{minipage}{0.49\textwidth}
		\centering
		\includegraphics[width = \columnwidth]{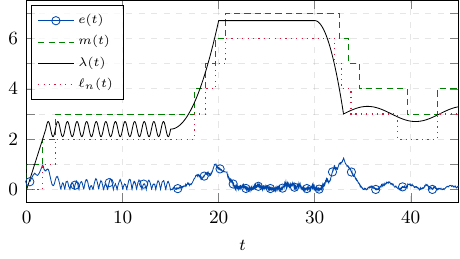}
		\caption{Response of the threshold policy to a highly variable demand, with $n = 500$ and $\alpha = 0.91$ as in \eqref{ch2-eq: criteria for setting alpha} for $\lambda_{\max} = 10$.}
		\label{ch2-fig: time varying}
	\end{minipage}
\end{figure}

The policy that assigns every incoming task to a server pool selected uniformly at random exhibits the largest variance, with some tasks receiving all the resources of a given server pool and some others contending for resources with as many as $15$ tasks. Users are treated more fairly when tasks are sent to the least congested of two server pools selected uniformly at random, but still some tasks share a server pool with as many as $13$ other tasks. Finally, virtually all tasks share a server pool with $10$ or $11$ other tasks when JSQ or the threshold-based policy are used. For these two policies the load is evenly distributed and tasks are treated fairly, so no user experiences an inferior quality of service. The Schur-concave utilities of Section \ref{ch2-sec: model description}, which measure the overall quality of service provided to users, are maximized.

\subsection{Fluctuating demand}

We conclude by studying the response of the learning scheme to highly variable demand patterns. In particular, the trajectories depicted in Figure \ref{ch2-fig: time varying} correspond to a system where $\lambda$ is time-varying and $\alpha$ satisfies \eqref{ch2-eq: criteria for setting alpha}. The system copes effectively with drastic and abrupt load fluctuations, such as the ones at $t = 0, 20, 30$; in all these cases $\ell_n$ quickly reaches the new optimal value. Further, the small but swift load fluctuations in the time interval $[2, 15]$ do not move the threshold away from the optimal value, which remains constant along the entire interval. Finally, $\ell_n$ adjusts to the slow oscillations in $[33, 45]$, which result in changes of the optimal value.

Along with the threshold, we plot
\begin{equation*}
	m(t) \defeq \max\set{i \geq 0}{\bq_n(t, i) > 0} \quad \text{and} \quad e(t) \defeq \norm{\bq_n(t) - q^*(t)}_2, 
\end{equation*}
where $\norm{\scdot}_2$ is the standard Euclidean norm and $q^*(t)$ is computed in terms of $\lambda(t)$ from \eqref{ch2-eq: even distribution of the load}. We note that $m(t)$ equals $\ceil{\lambda(t)}$ most of the time and $e(t)$ is usually small, with peak values coinciding with the most drastic fluctuations of the arrival rate. This means that concentrations of tasks at individual server pools are avoided and the loads are close to balanced most of the time.

\begin{appendices}
	
	\section{Proofs of various results}
	\label{ch2-app: proofs of various results}
	
	\begin{proof}[Proof of Lemma \ref{ch2-lem: interchange of infinite sum and derivative}]
		By \eqref{ch2-eq: static fluid limit} and Tonelli's theorem,
		\begin{align*}
			\bv_j(t) &= \sum_{i = j}^\infty \left[\bq(0, i) + \int_0^t \left(\lambda p_i\left(\bq(s), \ell\right) - i\left[\bq(s, i) - \bq(s, i + 1)\right]\right)ds\right] \\
			&= \bv_j(0) + \sum_{i = j}^\infty \int_0^t \lambda p_i\left(\bq(s), \ell\right)ds - \sum_{i = j}^\infty \int_0^t i\left[\bq(s, i) - \bq(s, i + 1)\right]ds \\
			&= \bv_j(0) + \int_0^t \lambda \sum_{i = j}^\infty p_i\left(\bq(s), \ell\right)ds - \int_0^t \left[(j - 1)\bq(s, j) + \bv_j(s)\right]ds. 
		\end{align*}
		It follows that the differential version holds almost everywhere.
	\end{proof}
	
	\begin{proof}[Proof of Lemma \ref{ch2-lem: exponential bound}]
		The function defined on $[a, b)$ by
		\begin{equation*}
			x \mapsto f(a) + \int_a^x \left[\varphi(y) - f(y)\right]dy - f(x)
		\end{equation*} 
		is nonnegative, absolutely continuous on finite intervals and equal to zero at $x = a$. Thus, there exists a locally integrable nonnegative function $\map{\sigma}{[a, b)}{\R}$ such that
		\begin{equation}
			\label{ch2-aux: integral equation}
			f(x) = f(a) + \int_a^x \left[\varphi(y) - f(y) - \sigma(y)\right]dy.
		\end{equation}
		
		By direct integration, it is possible to check that
		\begin{align*}
			f(x) = f(a)\e^{-(x - a)} + \int_a^x \left[\varphi(y) - \sigma(y)\right]\e^{-(x - y)}dy
		\end{align*}
		solves \eqref{ch2-aux: integral equation}. If $f(a)$, $\varphi$ and $\sigma$ are given, then this solution is unique. Indeed, suppose that $f$ and $g$ solve \eqref{ch2-aux: integral equation}. Then $h \defeq f - g$ satisfies $h(a) = 0$ and $\dot{h} = - h$ almost everywhere in $[a, b)$. We conclude that $h = 0$ and thus $f = g$. 
		
		Since $\sigma$ is nonnegative, we conclude that
		\begin{align*}
			f(x) &= f(a)\e^{-(x - a)} + \int_a^x \left[\varphi(y) - \sigma(y)\right]\e^{-(x - y)}dy \\
			&\leq f(a)\e^{-(x - a)} + \int_a^x \varphi(y)\e^{-(x - y)}dy.
		\end{align*}
		This completes the proof.
	\end{proof}
	
	\begin{proof}[Proof of Proposition \ref{ch2-prop: bounds for leq}]
		Proposition \ref{ch2-prop: asymptotic upper bounds for tail functions in the dynamic case} with $m = \ell_{\eq}$, $a = t_{\eq}$ and $b = \infty$ yields
		\begin{equation*}
			\bv_{h_{\eq} + 1}(t) \leq \bv_{h_\eq + 1}(t_\eq)\e^{-(t - t_\eq)} \quad \text{for all} \quad t \geq t_{\eq} \quad \text{with} \quad h_{\eq} \defeq \ell_{\eq} + 1.
		\end{equation*}
		It follows that $\bu(t) - \bv_{h_{\eq} + 1}(t) \to \lambda$ as $t \to \infty$ by \eqref{ch2-eq: total mass}. As a result,
		\begin{align*}
			h_{\eq} \geq \lim_{t \to \infty} \sum_{i = 1}^{h_{\eq}} \bq(t, i) = \lim_{t \to \infty} \left[\bu(t) - \bv_{h_{\eq} + 1}(t)\right] = \lambda.
		\end{align*}
		This implies that $\ell_{\eq} \geq \lambda - 1$ for all $\lambda > 0$. Moreover, if $\lambda \notin \N$, then $h_{\eq} \geq \clambda$ and thus $\ell_{\eq} \geq \flambda$, which proves (a).
		
		In order to prove (b), note that $\bq(t, i) \geq \bq(t, \ell_{\eq}) \geq \alpha$ for all $t \geq t_{\eq}$ and $i \leq \ell_{\eq}$ by Definition \ref{ch2-def: fluid system}. Therefore,
		\begin{align*}
			0 \leq \limsup_{t \to \infty} \bq(t, h_{\eq}) &= \limsup_{t \to \infty} \left[\bu(t) - \bv_{h_{\eq} + 1}(t) - \sum_{i = 1}^{\ell_{\eq}} \bq(t, i)\right] \leq \lambda - \alpha\ell_{\eq}.
		\end{align*}
		This implies that (b) holds.
	\end{proof}
	
	\section{Proofs of the fluid limits}
	\label{ch2-app: proof of the fluid limits}
	
	Below we prove Theorems \ref{ch2-the: static fluid limit} and \ref{ch2-the: dynamic fluid limit}. First the stochastic systems $\bs_n = (\bq_n, \ell_n)$ are defined on a common probability space in Section \ref{ch2-sap: construction on a common probability space}. In Section \ref{ch2-sap: relative compactness of sample paths} we show that the sequence $\set{\bq_n}{n \geq 1}$ is almost surely relatively compact with respect to a suitable metric. All systems have the same constant threshold in Section \ref{ch2-sap: static threshold}, where we prove Theorem \ref{ch2-the: static fluid limit}. In Section \ref{ch2-sap: linear schemes} the thresholds are adjusted as described in Section \ref{ch2-sec: learning the optimal threshold} and we provide the proof of Theorem \ref{ch2-the: dynamic fluid limit}.
	
	\subsection{Construction on a common probability space}
	\label{ch2-sap: construction on a common probability space}
	
	Consider the following stochastic processes and random variables.
	\begin{itemize}
		\item \emph{Driving Poisson processes:} a Poisson process $\calN_\lambda$ of rate $\lambda$ and a sequence of Poisson processes $\set{\calN_1^i}{i \geq 1}$ of unit rates. These are independent stochastic processes defined on a common probability space $(\Omega_D, \calF_D, \prob_D)$.
		
		\item \emph{Selection variables:} a sequence $\set{U_j}{j \geq 1}$ of independent random variables, uniformly distributed on $[0, 1)$ and defined on a probability space $(\Omega_S, \calF_S, \prob_S)$.
		
		\item \emph{Initial conditions:} random variables $\set{\left(\bq_n(0), \ell_n(0)\right)}{n \geq 1}$ that describe the initial conditions of the systems and are defined on the common probability space $(\Omega_I, \calF_I, \prob_I)$. The random variable $\ell_n(0)$ takes values in $\N$ and $\bq_n(0)$ takes values in the set $\set{q \in Q}{nq(i) \in \N \ \text{for all}\ i \geq 0}$.
	\end{itemize}
	For the last set of random variables, recall that
	\begin{equation*}
		Q \defeq \set{q \in [0, 1]^\N}{q(i + 1) \leq q(i) \leq q(0) = 1\ \text{for all}\ i \geq 1\ \text{and}\ \sum_{i = 1}^\infty q(i) < \infty}.
	\end{equation*}
	
	The processes $\bs_n = (\bq_n, \ell_n)$ will be defined on the completion $(\Omega, \calF, \prob)$ of the product probability space of the spaces $(\Omega_D, \calF_D, \prob_D)$, $(\Omega_S, \calF_S, \prob_S)$ and $(\Omega_I, \calF_I, \prob_I)$  from a set of stochastic equations defined in terms of the above primitives. This type of construction is standard; for example, see \cite{tsitsiklis2013power,stolyar2015pull}.
	
	\subsubsection{Preliminary notation}
	
	Let $q \in Q$ represent the occupancy state of a system. The intervals
	\begin{align*}
		I_i(q) \defeq \left[1 - q(i - 1), 1 - q(i)\right), \quad \text{with} \quad i \geq 1,
	\end{align*}
	form a partition of $[0, 1)$ and have lengths which are proportional to the number of server pools with exactly $i - 1$ tasks. If $q(\ell) < 1$ for some $\ell \in \N$, representing the threshold, then $[0, 1)$ may also be partitioned into the intervals
	\begin{align*}
		J_i(q, \ell) \defeq \left[\frac{1 - q(i - 1)}{1 - q(\ell)}, \frac{1 - q(i)}{1 - q(\ell)}\right) \quad \text{with} \quad 1 \leq i \leq \ell.
	\end{align*}
	Observe that the length of $J_i(q, \ell)$ is the fraction of server pools with precisely $i - 1$ tasks divided by the fraction of server pools with at most $\ell$ tasks; if $q(\ell) = 1$, then we define $J_i(q, \ell) \defeq \emptyset$ for all $1 \leq i \leq \ell$. Letting $h \defeq \ell + 1$, we may now let
	\begin{align}
		\label{ch2-eq: load balancing policy}
		r_{ij}(q, \ell) \defeq \begin{cases}
			\ind{U_j \in J_i\left(q, \ell\right)} & \text{if} \quad i - 1 < \ell, \\
			\ind{q(\ell) = 1}\ind{q(h) < 1} & \text{if} \quad i - 1 = \ell, \\
			\ind{q(h) = 1}\ind{U_j \in I_i\left(q\right)} & \text{if} \quad i - 1 > \ell,
		\end{cases}
		\quad \text{for all} \quad i,j \geq 1.
	\end{align}
	
	We will use the functions $r_{ij}$ to dispatch tasks in the stochastic systems. Namely, if $q$ is the occupancy state and $\ell$ is the threshold, then the $j$th incoming task is dispatched to a server pool with exactly $i - 1$ tasks if and only if $r_{ij}(q, \ell) = 1$. Note that for each fixed $j$ the functions $r_{ij}(q, \ell)$ take values in $\{0, 1\}$ and add up to one.
	
	\subsubsection{Stochastic equations}
	
	We postulate that $\calN_\lambda(nt)$ is the number of tasks that arrive to the system in the interval $[0, t]$. In addition, we let $\set{\sigma_{nj}}{j \geq 1}$ denote the arrival times.
	
	For each pair of functions $\map{\bq}{[0, \infty)}{Q}$ and $\map{\ell}{[0, \infty)}{\N}$, we define two processes: $\calA_n(\bq, \ell)$ and $\calD_n(\bq)$, with values in $\N^\N / n$. We define $\calA_n(\bq, \ell, t, i) \defeq 0$ and $\calD_n(\bq, t, i) \defeq 0$ for $i = 0$ and all $t \geq 0$. For each $i \geq 1$ we let:
	\begin{align*}
		&\calA_n(\bq, \ell, t, i) \defeq \frac{1}{n}\sum_{j = 1}^{\calN_\lambda(n t)} r_{ij}\left(\bq\left(\sigma_{nj}^-\right), \ell\left(\sigma_{nj}^-\right)\right), \\
		&\calD_n(\bq, t, i) \defeq \frac{1}{n}\calN_1^i\left(n\int_0^ti\left[\bq_i(s) - \bq_{i + 1}(s)\right]ds\right).
	\end{align*}
	
	Suppose that the thresholds are adjusted over time according to the control rule described in Section \ref{ch2-sec: learning the optimal threshold}. Then the stochastic system $\bs_n = (\bq_n, \ell_n)$ is defined as the unique solution of the following set of stochastic equations:
	\begin{subequations}
		\label{ch2-eq: implicit equations}
		\begin{align}
			\bq(t) &= \bq_n(0) + \calA_n(\bq, \ell, t) - \calD_n(\bq, t), \label{ch2-seq: aggregate state} \\ 
			\ell(t) &= \ell_n(0) + \sum_{j = 1}^{\calN_\lambda(n t)} \left[\ind{\bq\left(\sigma_{nj}^-, h\left(\sigma_{nj}^-\right)\right) \geq 1 - 1 / n} - \ind{\bq\left(\sigma_{nj}^-, \ell\left(\sigma_{nj}^-\right)\right) \leq \alpha}\right], \label{ch2-seq: threshold}
		\end{align}
	\end{subequations}
	where $h(t) \defeq \ell(t) + 1$ and the unknowns are $\bq$ and $\ell$. If the thresholds are constant over time, then \eqref{ch2-seq: threshold} is replaced by $\ell(t) = k$ for some fixed $k \in \N$ and all $t \geq 0$.
	
	In both cases it is possible to prove by induction on the jump times of the driving Poisson processes that a unique solution defined on $[0, \infty)$ exists almost surely. If we let $\calA_n \defeq \calA_n(\bq_n, \ell_n)$ and $\calD_n \defeq \calD_n(\bq_n, \ell_n)$, then we may interpret \eqref{ch2-eq: implicit equations} as follows.
	\begin{itemize}
		\item The process $\calA_n(i)$ counts the number of arrivals to server pools with exactly $i - 1$ tasks. That it has a jump at an arrival time depends on the dispatching decisions encoded in $r_{ij}\left(\bq_n, \ell_n\right)$. The random variables $\calA_n(t, i)$ add up to $\calN_\lambda(nt)$.
		
		\item The process $\calD_n(i)$ counts the number of departures from server pools with exactly $i$ tasks. It is a Poisson process of rate $n i\left[\bq_n(i) - \bq_n(i + 1)\right]$, equal to the number of tasks in server pools with exactly $i$ tasks.
		
		\item The threshold $\ell_n$ is only adjusted at the arrival times. It increases by one if $n\bq_n(h_n) \geq n - 1$ right before the arrival, and decreases by one if $\bq_n(\ell_n) \leq \alpha$ right before the arrival.
	\end{itemize}
	
	Note that the processes $\calA_n$ and $\calD_n$ have nondecreasing components that are equal to zero at time zero. Also, $\bq_n$ takes values in $Q$; since the total number of tasks in the initial occupancy state is finite, the total number of tasks remains finite.
	
	\subsection{Relative compactness of occupancy processes}
	\label{ch2-sap: relative compactness of sample paths}
	
	In this section we prove that, for $\omega$ in a set of probability one, the sequences
	\begin{equation}
		\label{ch2-aux: sequences}
		\set{\calA_n(\omega)}{n \geq 1}, \quad \set{\calD_n(\omega)}{n \geq 1} \quad \text{and} \quad \set{\bq_n(\omega)}{n \geq 1}
	\end{equation}
	are relatively compact in $D_{\R^\N}[0, \infty)$; i.e., their closures are compact. As we explain below, $D_{\R^\N}[0, \infty)$ is a metrizable space and thus relative compactness is equivalent to sequential compactness. In particular, we prove that every subsequence of the above sequences has a further subsequence that converges in $D_{\R^\N}[0, \infty)$.
	
	Consider the metric
	\begin{equation*}
		d(x, y) \defeq \sum_{i = 0}^\infty \frac{\min\{|x(i) - y(i)|, 1\}}{2^i} \quad \text{for all} \quad x, y \in \R^\N,
	\end{equation*}
	which induces the product topology in $\R^\N$. Let $D_{\R^\N}[0, T]$ denote the space of c\`adl\`ag functions from $[0, T]$ into $\R^\N$. We equip this space with the uniform metric:
	\begin{equation*}
		\rho_u^T(\bx, \by) \defeq \sup_{t \in [0, T]} d(\bx(t), \by(t)) \quad \text{for all} \quad \bx, \by \in D_{\R^\N}[0, T].
	\end{equation*}
	The space $D_{\R^\N}[0, \infty)$ is metrizable since the topology of uniform convergence over compact sets is compatible with the metric defined by
	\begin{equation*}
		\rho_u^\infty \defeq \sum_{T = 0}^\infty \frac{\min\{\rho_u^T(\bx, \by), 1\}}{2^T} \quad \text{for all} \quad \bx, \by \in D_{\R^\N}[0, \infty).
	\end{equation*}
	
	We first prove that the sequences in \eqref{ch2-aux: sequences} are relatively compact in $D_{\R^\N}[0, T]$ for all $T \geq 0$ and all $\omega$ in a set of probability one; here we are actually referring to the restrictions to $[0, T]$ of the functions $\calA_n(\omega)$, $\calD_n(\omega)$ and $\bq_n(\omega)$. Then we establish the relative compactness in $D_{\R^\N}[0, \infty)$, also almost surely.
	
	As in the statements of Theorems \ref{ch2-the: static fluid limit} and \ref{ch2-the: dynamic fluid limit}, we assume throughout this section that there exists a random variable $q_0$ with values in $Q$ such that
	\begin{equation}
		\label{ch2-ass: convergence of initial conditions}
		\lim_{n \to \infty} d\left(\bq_n(0), q_0\right) = 0 \quad \text{almost surely.}
	\end{equation}
	The subsequent arguments are based on \cite{bramson1998state}; see \cite[Section 3.3.2]{zubeldia2019delay} as well. These arguments make no use of the specific dynamics of the threshold. In particular, the results hold both when the threshold is constant and when it is adjusted over time.
	
	\begin{proposition}
		\label{ch2-prop: set of probability one}
		There exists a set of probability one $\Gamma_T$ where:
		\begin{subequations}
			\label{ch2-eq: convergence of initial condition and strong laws for stochastic primitives}
			\begin{align}
				&\lim_{n \to \infty} d\left(\bq_n(0), q_0\right) = 0, \label{ch2-seq: convergence of initial condition}\\
				&\lim_{n \to \infty} \sup_{t \in [0, T]} \left|\frac{1}{n}\calN_\lambda(n t) - \lambda t\right| = 0, \label{ch2-seq: law of large numbers for arrivals}\\
				&\lim_{n \to \infty} \sup_{t \in [0, iT]} \left|\frac{1}{n}\calN_1^i(n t) - t\right| = 0 \quad \text{for all} \quad i \geq 1. \label{ch2-seq: law of large numbers for departures} \\
				&\lim_{k \to \infty} \frac{1}{k}\sum_{j = 1}^{k} \ind{U_j \in [a, b)} = b - a \quad \text{for all} \quad [a, b) \subset [0, 1). \label{ch2-seq: glivenko-cantelli}
			\end{align}
		\end{subequations}
	\end{proposition}
	
	\begin{proof}
		This result is a straightforward consequence of the functional strong law of large numbers for the Poisson process, which applies to \eqref{ch2-seq: law of large numbers for arrivals} and \eqref{ch2-seq: law of large numbers for departures}, and the strong law of large numbers for independent and identically distributed random variables, which applies to \eqref{ch2-seq: glivenko-cantelli}.
	\end{proof}
	
	\begin{remark}
		We will use \eqref{ch2-seq: convergence of initial condition}, \eqref{ch2-seq: law of large numbers for arrivals} and \eqref{ch2-seq: law of large numbers for departures} in this section, but \eqref{ch2-seq: glivenko-cantelli} will only be used later, to characterize the limits of convergent subsequences.
	\end{remark}
	
	By \eqref{ch2-seq: aggregate state} and \eqref{ch2-seq: convergence of initial condition}, it suffices to prove that $\calA_n(\omega)$ and $\calD_n(\omega)$ form relatively compact sequences for all $\omega \in \Gamma_T$. Consider the space $D[0, T]$ of all c\`adl\`ag functions from $[0, T]$ into $\R$ with the uniform norm:
	\begin{align*}
		\norm{\bx}_T = \sup_{t \in [0, T]} |\bx(t)| \quad \text{for all} \quad \bx \in D[0, T].
	\end{align*}
	Note that $\bx_n \to \bx$ as $n \to \infty$ in $D_{\R^\N}[0, T]$ if and only if $\bx_n(i) \to \bx(i)$ as $n \to \infty$ in $D[0, T]$ for all $i \geq 0$. Furthermore, we have the following proposition.
	
	\begin{proposition}
		\label{ch2-prop: relative compactness of coordinates}
		The sequence $\set{\bx_n}{n \geq 1}$ is relatively compact in $D_{\R^\N}[0, T]$ if and only if $\set{\bx_n(i)}{n \geq 1}$ is relatively compact in $D[0, T]$ for all $i \geq 0$.
	\end{proposition}
	
	\begin{proof}
		We only need to prove the converse, so assume that  $\set{\bx_n(i)}{n \geq 1}$ is relatively compact for all $i \geq 0$. Given an increasing sequence $\calK \subset \N$, we must show that there exists a subsequence of $\set{\bx_k}{k \in \calK}$ that converges in $D_{\R^\N}[0, T]$. For this purpose, we may define a family of increasing sequences $\set{\calJ_i}{i \geq 0}$ such that:
		\begin{enumerate}
			\item[(a)] $\calJ_{i + 1} \subset \calJ_i \subset \calK$ for all $i \geq 0$,
			
			\item[(b)] $\set{\bx_j(i)}{j \in \calJ_i}$ has a limit $\bx(i) \in D[0, T]$ for each $i \geq 0$.
		\end{enumerate}
		
		Define $\set{k_j}{j \geq 1} \subset \calK$ such that $k_j$ is the $j$th element of $\calJ_j$. Then
		\begin{equation*}
			\lim_{j \to \infty} \norm{\bx_{k_j}(i) - \bx(i)}_T = 0 \quad \text{for all} \quad i \geq 0.
		\end{equation*} 
		Let $\bx \in D_{\R^\N}[0, T]$ be the function with components the functions $\bx(i)$. Then $\bx_{k_j}$ converges to $\bx$ as $j \to \infty$ in $D_{\R^\N}[0, T]$.
	\end{proof}
	
	Let us fix some $\omega \in \Gamma_T$, which we omit from the notation for brevity. As a result of the proposition, it suffices to establish that $\set{\calA_n(i)}{n \geq 1}$ and $\set{\calD_n(i)}{n \geq 1}$ are relatively compact in $D[0, T]$ for all $i \geq 0$. Consider the sets
	\begin{equation*}
		L_M \defeq \set{\bx \in D[0, T]}{\bx(0) = 0\ \text{and}\ \left|\bx(t) - \bx(s)\right| \leq M|t - s|\ \text{for all}\ s, t \in [0, T]},
	\end{equation*}
	which are compact for each $M > 0$ by the Arzel\'a-Ascoli theorem. For each $i \geq 0$, we prove that there exists $M_i$ such that $\calA_n(i)$ and $\calD_n(i)$ approach $L_{M_i}$ as $n$ grows large. Then we use the compactness of $L_{M_i}$ to show that $\set{\calA_n(i)}{n \geq 1}$ and $\set{\calD_n(i)}{n \geq 1}$ are relatively compact subsets of $D[0, T]$. For this purpose, we define
	\begin{equation*}
		L_M^\varepsilon \defeq \set{\bx \in D[0, T]}{\bx(0) = 0\ \text{and}\ \left|\bx(t) - \bx(s)\right| \leq M|t - s| + \varepsilon\ \text{for all}\ s, t \in [0, T]}.
	\end{equation*}
	
	\begin{lemma}
		\label{ch2-lem: bramson}
		If $\bx \in L_M^\varepsilon$, then there exists $\by \in L_M$ such that $\norm{\bx - \by}_T \leq 4\varepsilon$.
	\end{lemma}
	
	The above lemma is a restatement of \cite[Lemma 4.2]{bramson1998state}. Together with the next lemma, it implies that for each $i \geq 0$ there exists $M_i$ such that $\calA_n(i)$ and $\calD_n(i)$ approach the set $L_{M_i}$ of Lipschitz functions of modulus $M_i$ as $n \to \infty$. Recall that we fixed some $\omega \in \Gamma_T$. The following lemma applies any $\omega \in \Gamma_T$.
	
	\begin{lemma}
		\label{ch2-lem: almost lipschitz}
		For each $i \geq 0$ there exist constants $M_i > 0$ and $\set{\varepsilon_n(i) > 0}{n \geq 1}$ such that $\calA_n(i), \calD_n(i) \in L_{M_i}^{\varepsilon_n(i)}$ for all $n$ and $\varepsilon_n(i) \to 0$ as $n \to \infty$.
	\end{lemma}
	
	\begin{proof}
		Since $\calA_n(i)$ and $\calD_n(i)$ are identically zero for $i = 0$, we may focus on the case where $i \geq 1$. For all $s, t \in [0, T]$ we have
		\begin{align*}
			\left|\calA_n(t, i) - \calA_n(s, i)\right| \leq \frac{1}{n} \left|\calN_\lambda(n t) - \calN_\lambda(n s)\right| \leq \lambda|t - s| + 2\sup_{u \in [0, T]} \left|\frac{1}{n}\calN_\lambda(n u) - \lambda u\right|.
		\end{align*}
		By \eqref{ch2-seq: law of large numbers for arrivals}, there exist $\set{\varepsilon_n^1(i) > 0}{n \geq 1}$ such that
		\begin{align*}
			\left|\calA_n(t, i) - \calA_n(s, i)\right| \leq \lambda|t - s| + \varepsilon_n^1(i) \quad \text{for all} \quad s, t \in [0, T] \quad \text{and} \quad \lim_{n \to \infty} \varepsilon_n^1(i) = 0.
		\end{align*}
		
		For each $t \in [0, T]$ let
		\begin{align*}
			\boldf_n(t, i) \defeq \int_0^t i\left[\bq_n(s, i) - \bq_n(s, i + 1)\right]ds.
		\end{align*}
		This function has the following two properties:
		\begin{align*}
			\boldf_n(T, i) \leq iT \quad \text{and} \quad \left|\boldf_n(t, i) - \boldf_n(s, i)\right| \leq i|t - s| \quad \text{for all} \quad s, t \in [0, T].
		\end{align*}
		We conclude that
		\begin{align*}
			\left|\calD_n(t, i) - \calD_n(s, i)\right| &= \frac{1}{n} \left|\calN_1^i\left(n \boldf_n(t, i)\right) - \calN_1^i\left(n \boldf_n(s, i)\right)\right| \\
			&\leq |\boldf_n(t, i) - \boldf_n(s, i)| + 2\sup_{u \in [0, T]} \left|\frac{1}{n}\calN_1^i\left(n \boldf_n(u, i)\right) - \boldf_n(u, i)\right| \\
			&\leq i|t - s| + 2\sup_{u \in [0, iT]} \left|\frac{1}{n}\calN_1^i(n u) - u\right| \quad \text{for all} \quad s, t \in [0, T].
		\end{align*}
		By \eqref{ch2-seq: law of large numbers for departures}, there exist $\set{\varepsilon_{n}^2(i) > 0}{n \geq 1}$ such that
		\begin{align*}
			\left|\calD_n(t, i) - \calD_n(s, i)\right| \leq i|t - s| + \varepsilon_{n}^2(i) \quad \text{for all} \quad s, t \in [0, T] \quad \text{and} \quad \lim_{n \to \infty} \varepsilon_{n}^2(i) = 0.
		\end{align*}
		
		The result follows letting $M_i \defeq \max\left\{\lambda, i\right\}$ and $\varepsilon_n(i) \defeq \max\left\{\varepsilon_n^1(i), \varepsilon_{n}^2(i)\right\}$.
	\end{proof}
	
	We now prove that the sequences in \eqref{ch2-aux: sequences} are relatively compact in $D_{\R^\N}[0, T]$ for all $T \geq 0$ and all $\omega$ in the set of probability one $\Gamma_T$.
	
	\begin{proposition}
		\label{ch2-prop: tightness}
		The restrictions to $[0, T]$ of $\calA_n(\omega)$, $\calD_n(\omega)$ and $\bq_n(\omega)$ constitute relatively compact sequences in $D_{\R^\N}[0, T]$ for all $T \geq 0$ and $\omega \in \Gamma_T$. Moreover, the limit of each convergent subsequence has Lipschitz components.
	\end{proposition}
	
	\begin{proof}
		Fix $T \geq 0$ and $\omega \in \Gamma_T$; we omit $\omega$ from the notation for brevity. It suffices to show that for each $i \geq 1$ every subsequence of $\set{\calA_n(i)}{n \geq 1}$ and $\set{\calD_n(i)}{n \geq 1}$  has a further subsequence that converges in $D[0, T]$ to a Lipschitz function.
		
		The above properties hold for $i = 0$. We now fix $i \geq 1$ and prove these properties for $\set{\calA_n(i)}{n \geq 1}$; the same arguments apply to $\set{\calD_n(i)}{n \geq 1}$.
		
		Let $M_i > 0$ and $\set{\varepsilon_{n}(i) > 0}{n \geq 1}$ be as in the statement of Lemma \ref{ch2-lem: almost lipschitz}. It follows from Lemma \ref{ch2-lem: bramson} that for each $n$ there exists $\ba_n(i) \in L_{M_i}$ such that
		\begin{equation*}
			\norm{\calA_n(i) - \ba_n(i)}_T \leq 4\varepsilon_n(i).
		\end{equation*}
		Recall that $L_{M_i}$ is compact, thus every increasing sequence of natural numbers has a subsequence $\calK$ such that $\set{\ba_k(i)}{k \in \calK}$ converges to a function $\ba \in L_{M_i}$. Moreover,
		\begin{equation*}
			\limsup_{k \to \infty} \norm{\calA_k(i) - \ba(i)}_T \leq \lim_{k \to \infty} \left[4\varepsilon_k(i) + \norm{\ba_k(i) - \ba(i)}_T\right] = 0,
		\end{equation*}
		where the limits are taken along $\calK$. Hence, every subsequence of $\set{\calA_n(i)}{n \geq 1}$ has a further subsequence that converges to a Lipschitz function.
	\end{proof}
	
	The fact that $\set{\calA_n}{n \geq 1}$, $\set{\calD_n}{n \geq 1}$ and $\set{\bq_n}{n \geq 1}$ are relatively compact in $D_{\R^\N}[0, \infty)$ with probability one follows as a corollary.
	
	\begin{theorem}
		\label{ch2-the: tightness}
		For all $\omega$ in a set of probability one $\Gamma_\infty$, the sequences
		\begin{equation*}
			\set{\calA_n(\omega)}{n \geq 1}, \quad \set{\calD_n(\omega)}{n \geq 1} \quad \text{and} \quad \set{\bq_n(\omega)}{n \geq 1}
		\end{equation*}
		are relatively compact in $D_{\R^\N}[0, \infty)$. Also, the limit of every convergent subsequence is a function with locally Lipschitz coordinate functions.
	\end{theorem}
	
	\begin{proof}
		Since $\Gamma_T$ has probability one for all $T \geq 0$, the set
		\begin{equation*}
			\Gamma_\infty \defeq \bigcap_{T \in \N} \Gamma_T
		\end{equation*}
		has probability one as well. We fix some $\omega \in \Gamma_\infty$, which we omit from the notation for brevity. Next we prove that $\set{\bq_n}{n \geq 1}$ is relatively compact in $D_{\R^\N}[0, \infty)$ and such that the limit of every convergent subsequence has locally Lipschitz components. Exactly the same arguments apply if $\bq_n$ is replaced by $\calA_n$ or $\calD_n$.
		
		Fix an arbitrary increasing sequenc $\calK \subset \N$. We must prove that $\set{\bq_k}{k \in \calK}$ has a subsequence that converges uniformly over compact sets to a function with locally Lipschitz components. For this purpose we may construct sequences $\set{\calK_T}{T \in \N}$ with the following two properties.
		\begin{enumerate}
			\item[(a)] $\calK_{T + 1} \subset \calK_T \subset \calK$ for all $T \in \N$.
			
			\item[(b)] For each $T \in \N$, there exists $\bq_T \in D_{\R^\N}[0, T]$ such that $\rho_u^T\left(\bq_k|_{[0, T]}, \bq_T\right) \to 0$ as $k \to \infty$ with $k \in \calK_T$, and the components of $\bq_T$ are Lipschitz.
		\end{enumerate}
		
		Let $k_l$ denote the $l$th element of $\calK_l$. It follows from (a) and (b) that
		\begin{equation}
			\label{ch2-aux: from finite intervals to half line}
			\lim_{l \to \infty} \rho_u^T \left(\bq_{k_l}|_{[0, T]}, \bq_T\right) = 0 \quad \text{for all} \quad T \in \N.
		\end{equation}
		Note that $\bq_S(t) = \bq_T(t)$ for all $t \leq S, T$ since $\bq_{k_l}(t) \to \bq_T(t)$ as $l \to \infty$ if $t \leq T$. We may thus define $\bq \in D_{\R^\N}[0, \infty)$ such that $\bq(t) = \bq_T(t)$ if $t \leq T$. Also, (b) implies that $\bq$ has locally Lipschitz components and \eqref{ch2-aux: from finite intervals to half line} says that $\bq_{k_l} \to \bq$ uniformly over compact sets as $l \to \infty$.
	\end{proof}
	
	\begin{remark}
		\label{ch2-rem: relative compactness independent of threshold dynamics}
		As noted earlier, the proofs of the results stated in this section made no use of the specific dynamics of the threshold. In particular, the previous theorem holds regardless of how the threshold evolves over time.
	\end{remark}
	
	\subsection{Systems with a static threshold}
	\label{ch2-sap: static threshold}
	
	In this section we consider systems where the threshold remains constant and we prove Theorem \ref{ch2-the: static fluid limit}. Specifically, we assume that there exists $\ell \in \N$ such that
	\begin{equation*}
		\ell_n(\omega, t) = \ell \quad \text{for all} \quad n \geq 1, \quad \omega \in \Omega \quad \text{and} \quad t \geq 0.
	\end{equation*}
	We have already proved in Theorem \ref{ch2-the: tightness} that there exists a set $\Gamma_\infty$ of probability one with the following property. If $\omega \in \Gamma_\infty$, then every subsequence of $\set{\bq_n(\omega)}{n \geq 1}$ has a further subsequence that converges uniformly over compact sets. It remains to show that every subsequential limit $\bq$ is such that $\bq(t) \in Q$ for all $t \geq 0$ and satisfies the system of differential equations defined in \eqref{ch2-eq: static fluid limit}. 
	
	For this purpose, we fix an arbitrary $\omega \in \Gamma_\infty$ and some increasing sequence of $\calK \subset \N$ such that $\set{\bq_k(\omega)}{k \in \calK}$ converges uniformly over compact sets; in the sequel we omit $\omega$ from the notation for brevity. By Theorem \ref{ch2-the: tightness}, we may assume without any loss of generality that $\set{\calA_k}{k \in \calK}$ and $\set{\calD_k}{k \in \calK}$ converge uniformly over compact sets to functions $\ba$ and $\bd$, respectively; this may require to replace $\calK$ by a further subsequence, which does not modify the subsequent arguments.
	
	It follows from \eqref{ch2-seq: aggregate state} and \eqref{ch2-seq: convergence of initial condition} that the limit $\bq \in D_{\R^\N}[0, \infty)$ of $\bq_k$ satisfies
	\begin{align*}
		\bq(t) = q_0 + \ba(t) - \bd(t) \quad \text{for all} \quad t \geq 0.
	\end{align*}
	By Theorem \ref{ch2-the: tightness}, the coordinate functions $\ba(i)$ and $\bd(i)$ are locally Lipschitz, thus almost everywhere differentiable. Furthermore, these functions are nondecreasing and satisfy $\ba(0, i) = \bd(0, i) = 0$ since $\calA_k(i)$ and $\calD_k(i)$ have these properties for all $k \in \calK$.
	
	\begin{lemma}
		\label{ch2-lem: derivatives}
		There exists a set $\calR \subset (0, \infty)$ such that the complement of $\calR$ has zero Lebesgue measure and the coordinate functions $\ba(i)$ and $\bd(i)$ are differentiable at every point of $\calR$ for all $i \geq 0$. Furthermore,
		\begin{equation}
			\label{ch2-aux: derivative of d}
			\dot{\bd}(t_0, i) = i\left[\bq(t_0, i) - \bq(t_0, i + 1)\right] \quad \text{for all} \quad i \geq 1 \quad \text{and} \quad t_0 \in \calR,
		\end{equation}
		and the derivatives $\dot{\ba}(t_0, i)$ are as follows.
		\begin{enumerate}
			\item[(a)] If $\bq(t_0, \ell) < 1$, then
			\begin{equation*}
				\dot{\ba}(t_0, i) = \begin{cases}
					\lambda\left[\frac{\bq(t_0, i - 1) - \bq(t_0, i)}{1 - \bq(t_0, \ell)}\right] & \text{if} \quad 1 \leq i \leq \ell, \\
					0 & \text{if} \quad i \geq h.
				\end{cases}
			\end{equation*}
			
			\item[(b)] If $\bq(t_0, \ell) = 1$ and $\bq(t_0, h) < 1$, then
			\begin{equation*}
				\dot{\ba}(t_0, i) = \begin{cases}
					\ell\left[1 - \bq(t_0, h)\right] & \text{if} \quad i = \ell, \\
					\lambda - \ell\left[1 - \bq(t_0, h)\right] & \text{if} \quad i = h, \\
					0 & \text{if} \quad i \neq \ell, h.
				\end{cases}
			\end{equation*}
			
			\item[(c)] If $\bq(t_0, h) = 1$, then
			\begin{equation*}
				\dot{\ba}(t_0, i) = \begin{cases}
					h\left[1 - \bq(t_0, h + 1)\right] & \text{if} \quad i = h, \\
					\left[\lambda - h\left(1 - \bq(t_0, h + 1)\right)\right]\left[\bq(t_0, i - 1) - \bq(t_0, i)\right] & \text{if} \quad i \geq h + 1, \\
					0 & \text{if} \quad 1 \leq i \leq \ell.
				\end{cases}
			\end{equation*}
		\end{enumerate}
	\end{lemma}
	
	\begin{proof}
		The existence of $\calR$ follows from the almost everywhere differentiability of the coordinate functions $\ba(i)$ and $\bd(i)$, which was already noted above.
		
		Fix $t_0 \in \calR$. Note that $\calD_k(i) \to \bd(i)$ and $\bq_k(i) \to \bq(i)$ uniformly over compact sets as $k \to \infty$ for all $i \geq 0$. It follows from these limits and \eqref{ch2-seq: law of large numbers for departures} that 
		\begin{align}
			\label{ch2-eq: explicit formula for d}
			\bd(t, i) = \int_0^t i\left[\bq(s, i) - \bq(s, i + 1)\right]ds \quad \text{for all} \quad i \geq 1 \quad \text{and} \quad t \geq 0.
		\end{align}
		In particular, we conclude that \eqref{ch2-aux: derivative of d} holds.
		
		In order to prove (a), fix $1 \leq i \leq \ell$ and $\varepsilon > 0$. The coordinate functions $\bq(i)$ are continuous, in fact locally Lipzchits, and we have $\bq_k(i) \to \bq(i)$ uniformly over compact sets as $k \to \infty$. Since $\bq(t_0, \ell) < 1$, this implies that
		\begin{align*}
			\frac{1 - \bq_k(t, i - 1)}{1 - \bq_k(t, \ell)} \geq \frac{1 - \bq(t_0, i - 1)}{1 - \bq(t_0, \ell)} - \varepsilon \quad \text{and} \quad \frac{1 - \bq_k(t, i)}{1 - \bq_k(t, \ell)} \leq \frac{1 - \bq(t_0, i)}{1 - \bq(t_0, \ell)} + \varepsilon
		\end{align*}
		for a sufficiently small $\delta > 0$, all $t \in (t_0 - \delta, t_0 + \delta)$ and all large enough $k \in \calK$. For all $k$ and $t$ satisfying the latter conditions, we have
		\begin{align*}
			\tilde{J}_i = \left[\frac{1 - \bq(t_0, i - 1)}{1 - \bq(t_0, \ell)} - \varepsilon, \frac{1 - \bq(t_0, i)}{1 - \bq(t_0, \ell)} + \varepsilon\right) \supset J_i\left(\bq_k(t), \ell\right).
		\end{align*}
		Using the definition of $\calA_k(i)$, we conclude that the following inequality holds for for all $t \in (t_0 - \delta, t_0 + \delta)$ and all sufficiently large $k \in \calK$:
		\begin{align*}
			\calA_k(t, i) - \calA_k(t_0, i) &= \frac{1}{k}\sum_{j = \calN_\lambda(kt_0) + 1}^{\calN_\lambda(kt)} \ind{U_j \in J_i\left(\bq_k\left(\sigma_{kj}^-\right), \ell\right)} \\
			&\leq \frac{1}{k}\sum_{j = \calN_\lambda(kt_0) + 1}^{\calN_\lambda(kt)} \ind{U_j \in \tilde{J}_i} \\
			&= \frac{\calN_\lambda(kt)}{k\calN_\lambda(kt)}\sum_{j = 1}^{\calN_\lambda(kt)} \ind{U_j \in \tilde{J}_i} - \frac{\calN_\lambda(kt_0)}{k\calN_\lambda(kt_0)}\sum_{j = 1}^{\calN_\lambda(kt_0)} \ind{U_j \in \tilde{J}_i}.
		\end{align*}
		
		By taking the limit as $k \to \infty$ on both sides, we obtain
		\begin{align}
			\label{ch2-eq: computing the derivative of a}
			\ba(t, i) - \ba(t_0, i) \leq \lambda(t - t_0)\left[\frac{\bq(t_0, i - 1) - \bq(t_0, i)}{1 - \bq(t_0, \ell)} + 2\varepsilon\right].
		\end{align}
		Here the limit of the right-hand side uses \eqref{ch2-seq: law of large numbers for arrivals} and \eqref{ch2-seq: glivenko-cantelli}. Note that the above inequality is preserved if $t \in (t_0, t_0 + \delta)$ and we divide both sides by $t - t_0$. If we then take the limit as $t \to t_0^+$, and next we take the limit as $\varepsilon \to 0$, then we obtain
		\begin{align*}
			\dot{\ba}(t_0, i) \leq \lambda\left[\frac{\bq(t_0, i - 1) - \bq(t_0, i)}{1 - \bq(t_0, \ell)}\right]. 
		\end{align*}
		If $t \in (t_0 - \delta, t_0)$, then \eqref{ch2-eq: computing the derivative of a} is reversed when we divide by $t - t_0$. Taking the limit as $t \to t_0^-$ and then as $\varepsilon \to 0$, we see that equality holds above, proving (a).
		
		In order to prove (b), note that there exists $\varepsilon > 0$ such that $\bq(t, h) < 1$ for all $t \in (t_0 - \varepsilon, t_0 + \varepsilon)$. Since $\bq_k(h) \to \bq(h)$ uniformly over compact sets as $k \to \infty$, the latter property also holds for $\bq_k(h)$ if $k \in \calK$ is sufficiently large. Hence, tasks arriving to large enough systems during the interval $(t_0 - \varepsilon, t_0 + \varepsilon)$ are exclusively sent to server pools with at most $\ell$ tasks. As a result, we have
		\begin{subequations}
			\begin{align}
				&\sum_{i = 1}^h \left[\calA_k(t, i) - \calA_k(t_0, i)\right] = \frac{1}{k}\left[\calN_\lambda\left(k t\right) - \calN_\lambda\left(k t_0\right)\right], \label{ch2-seq: sum of pre-limits} \\
				&\calA_k(t, i) = \calA_k(t_0, i) \quad \text{if} \quad i \geq h + 1, \label{ch2-seq: pre-limits that are zero}
			\end{align}
		\end{subequations}
		for all $t \in (t_0 - \varepsilon, t_0 + \varepsilon)$ and all sufficiently large $k \in \calK$. The right-hand side of the first equation converges uniformly over $(t_0 - \varepsilon, t_0 + \varepsilon)$ to $\lambda (t - t_0)$ by \eqref{ch2-seq: law of large numbers for arrivals}. By taking the limit as $k \to \infty$ on both sides of \eqref{ch2-seq: sum of pre-limits}, dividing both sides by $t - t_0$ next, and then taking the limit as $t \to t_0$, we conclude that
		\begin{align}
			\label{ch2-eq: sum of derivatives}
			\sum_{i = 1}^h \dot{\ba}(t_0, i) = \lambda.
		\end{align}
		Furthermore, it follows from \eqref{ch2-seq: pre-limits that are zero} that
		\begin{align}
			\label{ch2-eq: zero derivatives}
			\dot{\ba}(t_0, i) = 0 \quad \text{for all} \quad i \geq h + 1.
		\end{align}
		
		Note that $\bq(t_0, i) = 1$ implies that
		\begin{align*}
			0 \leq \lim_{t \to t_0^-} \frac{\bq(t, i) - \bq(t_0, i)}{t - t_0} = \dot{\bq}(t_0, i) = \lim_{t \to t_0^+} \frac{\bq(t, i) - \bq(t_0, i)}{t - t_0} \leq 0;
		\end{align*}
		since $t_0 > 0$ by assumption, we may take both left and right limits. Thus, $\bq(t_0, \ell) = 1$ implies $\dot{\bq}(t_0, i) = 0$ for all $1 \leq i \leq \ell$. We conclude that
		\begin{equation*}
			\dot{\ba}(t_0, i) = \dot{\bd}(t_0, i) = \begin{cases}
				0 & \text{if} \quad 1 \leq i \leq \ell - 1, \\
				\ell\left[1 - \bq(t_0, h)\right] & \text{if} \quad i = \ell.
			\end{cases}
		\end{equation*}
		Now (b) follows directly from \eqref{ch2-eq: sum of derivatives} and \eqref{ch2-eq: zero derivatives}.
		
		Finally, suppose that $\bq(t_0, h) = 1$ and let us prove (c). As above,
		\begin{align}
			\label{ch2-eq: derivatives for i <= h when qh = 1}
			\dot{\ba}(t_0, i) = \dot{\bd}(t_0, i) =  \begin{cases}
				0 & \text{if} \quad 1 \leq i \leq \ell, \\
				h\left[1 - \bq(t_0, h + 1)\right] & \text{if} \quad i = h.
			\end{cases}
		\end{align}
		In order to compute the other derivatives, recall that $\set{\sigma_{kj}}{j \geq 1}$ are the jump times of the arrival process $t \to \calN_\lambda(kt)$, and consider the following process:
		\begin{align*}
			\calC_k(t) &\defeq \frac{1}{k}\sum_{j = \calN_\lambda(kt_0) + 1}^{\calN_\lambda(k t)} \ind{\bq_k\left(\sigma_{kj}^-, h\right) = 1} \\
			&= \frac{1}{k}\left[\calN_\lambda(k t) - \calN_\lambda(k t_0)\right] - \sum_{j = 1}^h \left[\calA_k(t, j) - \calA_k(t_0, j)\right] \quad \text{for all} \quad t \geq 0.
		\end{align*}
		This process counts the number of tasks that arrive when all server pools have at least $h$ tasks, but relative to the number of tasks that arrive over the interval $[0, t_0]$ to server pools witht at least $h$ tasks. It follows from \eqref{ch2-seq: law of large numbers for arrivals} that
		\begin{align}
			\label{ch2-eq: fluid limit of auxiliary counting process}
			\lim_{k \to \infty} \sup_{t \in [0, T]} \left|\calC_k(t) - \left[\lambda (t - t_0) - \sum_{j = 1}^h \left(\ba(t, j) - \ba(t_0, j)\right)\right]\right| = 0
		\end{align}
		for all $T \geq 0$. Moreover, if we let $\set{\theta_{kj}}{j \geq 1}$ be the jump times of $\calC_k$ then
		\begin{align*}
			\calA_k(t, i) - \calA_k(t_0, i) = \frac{1}{k}\sum_{j = 1}^{\calC_k(t)} \ind{U_j \in I_i\left(\bq_k\left(\theta_{kj}^-\right)\right)} \quad \text{for all} \quad i \geq h + 1.
		\end{align*}
		
		Reasoning as in the proof of (a) we see that
		\begin{align*}
			\dot{\ba}(t_0, i) = \left[\lambda - \sum_{j = 1}^h \dot{\ba}(t_0, j)\right]\left[\bq(t_0, i - 1) - \bq(t_0, i)\right] \quad \text{for all} \quad i \geq h + 1.
		\end{align*}
		This equation is analogous to \eqref{ch2-eq: computing the derivative of a} after taking the limits with respect to $k$, then $t$ and finally $\varepsilon$; the limit with respect to $k$ follows from \eqref{ch2-eq: fluid limit of auxiliary counting process}. Using \eqref{ch2-eq: derivatives for i <= h when qh = 1}, we get
		\begin{align*}
			\dot{\ba}(t_0, i) = \left[\lambda - h\left(1 - \bq(t_0, h + 1)\right)\right]\left[\bq(t_0, i - 1) - \bq(t_0, i)\right] \quad \text{for all} \quad i \geq h + 1,
		\end{align*}
		which completes the proof of (c).
	\end{proof}
	
	\begin{remark}
		\label{ch2-rem: fluid limit when the threshold is locally constant}
		The lemma holds if the threshold is constant only in a neighborhood of the regular point $t_0$. It suffices that there exists $\varepsilon > 0$ such that $\ell_k(t) = \ell$ for all $t \in (t_0 - \varepsilon, t_0 + \varepsilon)$ and all large enough $k \in \calK$. The proof only changes slightly.
	\end{remark}
	
	We can now complete the proof of Theorem \ref{ch2-the: static fluid limit}.
	
	\begin{proof}[Proof of Theorem \ref{ch2-the: static fluid limit}]
		It follows from Theorem \ref{ch2-the: tightness} and Lemma \ref{ch2-lem: derivatives} that there exists a set of probability one $\Gamma_\infty$ with the following property. If $\omega \in \Gamma_\infty$, then every subsequence of $\set{\bq_n(\omega)}{n \geq 1}$ has a further subsequence that converges uniformly over compact sets to a function $\bq(\omega) \in D_{\R^\N}[0, \infty)$ that satisfies \eqref{ch2-eq: static fluid limit}.
		
		It remains to prove that there exists a subset of probability one of $\Gamma_\infty$ such that the subsequential limits $\bq(\omega)$ take values in $Q$ for all $\omega$ in this set. Specifically,
		\begin{align}
			&0 \leq \bq(\omega, t, i + 1) \leq \bq(\omega, t, i) \leq \bq(\omega, t, 0) = 1 \quad \text{for all} \quad t \geq 0 \quad \text{and} \quad i \geq 1 , \label{ch2-aux: natural inequalities} \\
			&\sum_{i = 1}^\infty \bq(\omega, t, i) < \infty \quad \text{for all} \quad t \geq 0. \label{ch2-aux: finite sum}
		\end{align}
		
		If $\set{\bq_k(\omega)}{k \in \calK}$ converges to $\bq(\omega)$ in $D_{\R^\N}[0, \infty)$, then
		\begin{equation*}
			\bq(\omega, t, i) = \lim_{k \to \infty} \bq_k(\omega, t, i) \quad \text{for all} \quad t \geq 0 \quad \text{and} \quad i \geq 1.
		\end{equation*}
		Hence, \eqref{ch2-aux: natural inequalities} follows from the fact that $\bq_k(\omega, t) \in Q$ for all $k \in \calK$ and $t \geq 0$. In order to obtain \eqref{ch2-aux: finite sum}, we will construct a set of probability one where the number of arrivals on any given finite interval of time is suitably bounded, and we will use the fact that \eqref{ch2-aux: finite sum} holds for the initial occupancy state $q_0$. 
		
		Specifically, let $\theta_t \defeq \lambda t (\e - 1) + 1$ for all $t \geq 0$ and note that
		\begin{equation*}
			\prob\left(\calN_\lambda(n t) > n\theta_t\right) \leq \frac{\e^{\lambda t(\e - 1) n}}{\e^{\theta_tn}} = \e^{-n},
		\end{equation*}
		by a Chernoff bound. It follows from the Borel-Cantelli lemma that
		\begin{equation*}
			E_t \defeq \set{\omega \in \Omega}{\calN_\lambda(n t) \leq n\theta_t\ \text{for all large enough}\ n}
		\end{equation*}
		has probability one. Therefore, the set
		\begin{equation*}
			\Gamma \defeq \bigcap_{t \in \N} \left(E_t \cap \Gamma_\infty\right)
		\end{equation*}
		has probability one. In particular, for each $\omega \in \Gamma$ and $t \geq 0$, we have
		\begin{equation*}
			\sum_{i = 1}^\infty \calA_n(\omega, t, i) \leq \frac{\calN_\lambda(nt)}{n} \leq \theta_t \quad \text{for all large enough} \quad n.
		\end{equation*}
		
		Fix $\omega \in \Gamma$ and suppose $\set{\bq_k(\omega)}{k \in \calK}$ converges to $\bq(\omega)$ in $D_{\R^\N}[0, \infty)$. Then
		\begin{align*}
			\sum_{i = 1}^m \bq(\omega, t, i) &= \lim_{k \to \infty} \sum_{i = 1}^m \left[\bq_k(\omega, 0, i) + \calA_k(\omega, t, i) - \calD_k(\omega, t, i)\right] \\
			&\leq \sum_{i = 1}^m q_0(\omega, i) + \limsup_{k \to \infty} \sum_{i = 1}^m \calA_k(\omega, t, i) \leq \sum_{i = 1}^m q_0(\omega, i) + \theta_t \quad \text{for} \quad m \geq 1.
		\end{align*}
		Since $m$ is arbitrary and $q_0(\omega) \in Q$, we conclude that \eqref{ch2-aux: finite sum} holds for $\omega \in \Gamma$.
	\end{proof}
	
	\subsection{Systems with a dynamic threshold}
	\label{ch2-sap: linear schemes}
	
	In this section we consider systems where the threshold is adjusted over time using the control rule of Section \ref{ch2-sec: learning the optimal threshold}, and we provide the proof of Theorem \ref{ch2-the: dynamic fluid limit}. As in the statement of the theorem, we assume that the initial occupancy states $\bq_n(0)$ converge almost surely to a random variable $q_0$, as in \eqref{ch2-ass: convergence of initial conditions}. Similarly, we assume that there exists a random variable $\ell_0$ with values in $\N$ such that
	\begin{align}
		\label{ch2-ass: convergence of initial threshold}
		\lim_{n \to \infty} \ell_n(0) = \ell_0 \quad \text{almost surely}.
	\end{align}
	We further assume that the initial condition $\left(q_0, \ell_0\right)$ satisfies
	\begin{align}
		\label{ch2-ass: nice initial condition}
		q_0\left(\ell_0\right) > \alpha \quad \text{and} \quad q_0\left(h_0\right) < 1 \quad \text{almost surely},
	\end{align}
	where $h_0 \defeq \ell_0 + 1$. As noted right before Theorem \ref{ch2-the: dynamic fluid limit}, this technical assumption ensures that the limiting threshold will not be modified at time zero.
	
	In order to prove the theorem, we approximate the systems $\bs_n = (\bq_n, \ell_n)$ by systems where the threshold is updated only finitely many times. Specifically, we let $\bs_n^m = (\bq_n^m, \ell_n^m)$ denote a system with $n$ server pools where the threshold is updated only the first $m$ times that the update condtions described in Section \ref{ch2-sub: learning rule} are met. Note that $\bs_n^m$ behaves exactly as $\bs_n$ until the $(m + 1)$th threshold update.
	
	Formally, we define $\tau_n^0 \defeq 0$ and we let $\set{\tau_n^m}{m \geq 1}$ denote the times at which the threshold changes in $\bs_n$. The process $\bs_n^m$ may be constructed on $(\Omega, \calF, \prob)$ as in Section \ref{ch2-sap: construction on a common probability space}, from the stochastic equations \eqref{ch2-eq: implicit equations} with \eqref{ch2-seq: threshold} replaced by
	\begin{align}
		\label{ch2-seq: threshold revisited}
		\ell(t) = \ell_n(t)\ind{t < \tau_n^m} + \ell_n(\tau_n^m)\ind{t \geq \tau_n^m}.
	\end{align}
	As in Section \ref{ch2-sap: construction on a common probability space}, the stochastic equations \eqref{ch2-seq: aggregate state}-\eqref{ch2-seq: threshold revisited} have a unique solution $\bs_n^m = (\bq_n^m, \ell_n^m)$ defined on $[0, \infty)$ with probability one. Furthermore, we have
	\begin{align}
		\label{ch2-eq: approximation by finite updates systems}
		\bq_n^m(t) = \bq_n(t) \quad \text{for all} \quad t \in \left[0, \tau_n^{m + 1}\right).
	\end{align}
	
	\begin{proposition}
		\label{ch2-prop: simultaneous relative compactness}
		There exists a set of probability one $\Gamma$ where the assumptions \eqref{ch2-ass: convergence of initial conditions}, \eqref{ch2-ass: convergence of initial threshold} and \eqref{ch2-ass: nice initial condition} hold. In addition, $\set{\bq_n^m(\omega)}{n \geq 1}$ and $\set{\bq_n(\omega)}{n \geq 1}$ are relatively compact in $D_{\R^\N}[0, \infty)$ for all $m \geq 1$ and $\omega \in \Gamma$, and the limit of every subsequential limit is a function with values in $Q$ and locally Lipschtiz components.
	\end{proposition}
	
	\begin{proof}
		Given any $m \geq 0$, the almost sure relative compactness of $\set{\bq_n^m}{n \geq 1}$ can be established as in Theorem \ref{ch2-the: tightness}. The arguments in the proof of Theorem \ref{ch2-the: static fluid limit} imply that every subsequential limit is a function with values in $Q$. These remarks also apply to $\set{\bq_n}{n \geq 1}$, so $\Gamma$ is the intersection of countably many sets of probability one.
	\end{proof}
	
	In order to prove Theorem \ref{ch2-the: dynamic fluid limit}, we will resort to an inductive argument using the systems $\bs_n^m$. The inductive step will be carried out in Lemma \ref{ch2-lem: induction step}, which depends on the following property of the differential equation \eqref{ch2-eq: static fluid limit}.
	
	\begin{lemma}
		\label{ch2-lem: time between consecutive threshold decreases is positive}
		Consider a solution $\map{\bq}{[a, b]}{Q}$ of \eqref{ch2-eq: static fluid limit} for a threshold $\ell \in \N$, and suppose that $\bq(t, \ell) \geq \alpha$ for all $t \in [a, b)$ and $\bq(b, \ell) = \alpha$. Then $\bq(b, \ell - 1) > \alpha$.
	\end{lemma}
	
	\begin{proof}
		Suppose that the claim is false. This implies that there exists $1 \leq i \leq \ell - 1$ such that $\bq(b, \ell - i) = \alpha$. Let $m$ be the largest integer $i$ with this property and let
		\begin{align*}
			\boldf(t) \defeq \lambda p_{\ell - m}(\bq(t), \ell) - (\ell - m)\left[\bq(t, \ell - m) - \bq(t, \ell - m + 1)\right] \quad \text{for all} \quad t \in [a, b].
		\end{align*}
		Note that $\dot{\bq}(\ell - m) = \boldf$ almost everywhere on $[a, b]$ and $\boldf$ is continuous in a left neighborhood of $b$, because $\bq$ has continuous coordinate functions and $\bq(b, \ell) < 1$. By $\alpha = \bq(b, \ell - m) \geq \bq(b, \ell - m + 1) \geq \bq(b, \ell) = \alpha$ and the maximality of $m$, we have
		\begin{align*}
			\boldf(b) = \lambda\left[\frac{\bq(b, \ell - m - 1) - \bq(b, \ell - m)}{1 - \bq(b, \ell)}\right] > 0.
		\end{align*}
		By continuity, there exists an open left neighborhood of $b$ where $\boldf \geq \boldf(b) / 2$. Since  $\bq(t, \ell - m) \geq \bq(t, \ell) \geq \alpha$ for all $t \in [a, b)$ and $\dot{\bq}(t, \ell - m) \geq \boldf(b) / 2 > 0$ almost everywhere on a left neighborhood of $b$, we have a contradiction: $\bq(b, \ell - m) > \alpha$.
	\end{proof}
	
	\begin{lemma}
		\label{ch2-lem: induction step}
		We fix an arbitrary $\omega \in \Gamma$ which is omitted from the notation for brevity. Suppose that there exist $m \geq 0$, a strictly increasing sequence $\set{\tau_i \geq 0}{0 \leq i \leq m}$ and a sequence of threshold values $\set{l_i \in \N}{0 \leq i \leq m}$ such that
		\begin{align*}
			\lim_{k \to \infty} \tau_k^i = \tau_i \quad \text{and} \quad \lim_{k \to \infty} \ell_k\left(\tau_k^i\right) = l_i \quad \text{for all} \quad 0 \leq i \leq m,
		\end{align*}
		where $k$ takes values in a sequence $\calK \subset \N$ such that $\set{\bq_k^m}{k \in \calK}$ and $\set{\bq_k}{k \in \calK}$ converge in $D_{\R^\N}[0, \infty)$ to functions $\bq^m$ and $\bq$, respectively. Then
		\begin{align}
			\label{ch2-aux: fluid dynamics}
			\dot{\bq}^m(i) = \lambda p_i(\bq^m, l_j) - i\left[\bq^m(i) - \bq^m(i + 1)\right] \quad \text{for all} \quad i \geq 1
		\end{align}
		almost everywhere on $\left[\tau_j, \tau_{j + 1}\right)$ if $0 \leq j < m$ and $\left[\tau_m, \infty\right)$ if $j = m$. Assume that
		\begin{align}
			\label{ch2-eq: good threshold assumption}
			\bq(\tau_m, l_m) > \alpha \quad \text{and} \quad \bq(\tau_m, h_m) < 1, \quad \text{where} \quad h_m \defeq l_m + 1.
		\end{align}
		If we let $\tau_{m + 1} \defeq \inf \set{t \geq 0}{\bq^m(t) \neq \bq(t)} \in (0, \infty]$, then
		\begin{equation}
			\label{ch2-aux: limit of update times}
			\lim_{k \to \infty} \tau_k^{m + 1} = \tau_{m + 1} > \tau_m.
		\end{equation}
		When $\tau_{m + 1} < \infty$, we have $\bq\left(\tau_{m + 1}, l_m\right) = \alpha$ or $\bq\left(\tau_{m + 1}, h_m\right) = 1$, and
		\begin{align}
			\label{ch2-eq: ell m + 1}
			l_{m + 1} \defeq \lim_{k \to \infty} \ell_k\left(\tau_k^{m + 1}\right) = \begin{cases}
				l_m - 1 & \text{if} \quad \bq(\tau_{m + 1}, l_m) = \alpha, \\
				l_m + 1 & \text{if} \quad \bq(\tau_{m + 1}, h_m) = 1.
			\end{cases}
		\end{align}
		If $\tau_{m + 1} < \infty$, then we further have:
		\begin{equation}
			\label{ch2-eq: last part of c}
			\bq\left(\tau_{m + 1}, l_{m + 1}\right) > \alpha \quad \text{and} \quad \bq\left(\tau_{m + 1}, h_{m + 1}\right) < 1.
		\end{equation}
	\end{lemma}
	
	\begin{proof}
		Let us prove \eqref{ch2-aux: fluid dynamics} only for $j < m$ since the same arguments apply for $j = m$. For each $t \in \left(\tau_j, \tau_{j + 1}\right)$, we have $\ell_k^m(s) = l_j$ for all $s \in (t - \varepsilon, t + \varepsilon)$, some $\varepsilon > 0$ and all sufficiently large $k \in \calK$. As a result, it follows from Lemma \ref{ch2-lem: derivatives} and Remark \ref{ch2-rem: fluid limit when the threshold is locally constant} that \eqref{ch2-aux: fluid dynamics} holds almost everywhere on $(t - \varepsilon, t + \varepsilon)$. Since $t$ is arbitrary, we conclude that the differential equations hold almost everywhere on $\left[\tau_j, \tau_{j + 1}\right)$.
		
		Since $\bq(l_m)$ and $\bq(h_m)$ are continuous functions by Proposition \ref{ch2-prop: simultaneous relative compactness}, we conclude from the uniform convergence of $\bq_k$ to $\bq$ and from \eqref{ch2-eq: good threshold assumption} that
		\begin{equation}
			\label{ch2-aux: conditions at tau m}
			\bq_k(t, l_m) \geq \alpha + \varepsilon \quad \text{and} \quad \bq_k(t, h_m) \leq 1 - \varepsilon \quad \text{for all} \quad t \in \left(\tau_m - \varepsilon, \tau_m + \varepsilon\right)
		\end{equation}
		for some $\varepsilon > 0$ and all sufficiently large $k \in \calK$. It follows that
		\begin{equation*}
			\tau \defeq \liminf_{k \to \infty} \tau_k^{m + 1} \geq \tau_m + \varepsilon,
		\end{equation*}
		because $\tau_k^m \in \left(\tau_m - \varepsilon, \tau_m + \varepsilon\right)$ for all large enough $k \in \calK$ and \eqref{ch2-aux: conditions at tau m} implies that the next threshold update cannot occur until $\tau_m + \varepsilon$. This proves the strict inequality on the right-hand side of \eqref{ch2-aux: limit of update times}. Furthermore, the following two properties hold:
		\begin{enumerate}
			\item[(i)] $\bq(t, l_m) \geq \alpha$ for all $t \in [\tau_m, \tau]$,
			
			\item[(ii)] $\bq(h_m + 1)$ is nonincreasing in $\left[\tau_m, \tau\right]$.
		\end{enumerate}
		
		For (i), observe that $\bq_k\left(\ell_k\right) \geq \alpha$ at every arrival time in $\left(\tau_k^m, \tau_k^{m + 1}\right)$ since the threshold does not change in this interval. Any time $t \in (\tau_m, \tau)$ can be approached by a sequence of such arrival times, indexed by $k \in \calK$, and we have $\ell_k(t) = l_m$  for all large enough $k$. It follows that (i) holds in $(\tau_m, \tau)$ and the continuity of $\bq(l_m)$ implies that (i) also holds in the closure. For (ii), note that $\bq_k\left(h_k\right) < 1$ at $\tau_k^m$ for all sufficiently large $k \in \calK$ by \eqref{ch2-aux: conditions at tau m}. Therefore, $\bq_k\left(h_k\right) < 1$ in $\left[\tau_k^m, \tau_k^{m + 1}\right)$ since an arrival that would make $\bq_k\left(h_k\right)$ reach one would also lead to a threshold increase. We conclude that all tasks are dispatched to server pools with at most $\ell_k$ tasks and thus $\bq_k\left(h_k + 1\right)$ is nonincreasing. Given $\tau_m < c < d < \tau$, we have $h_k(t) = h_m$ for all sufficiently large $k \in \calK$ and every $t \in (c, d)$. This implies that (ii) holds on any such interval $(c, d) \subset \left[\tau_m, \tau\right]$, and thus also in the interval $[\tau_m, \tau]$.
		
		In order to prove the equality in \eqref{ch2-aux: limit of update times}, we first establish that $\tau \geq \tau_{m + 1}$. Next we assume that the latter inequality does not hold and we arrive to a contradiction. If $\tau < \tau_{m + 1}$, then there exists $\calJ \subset \calK$ such that $\tau_j^{m + 1} \to \tau < \tau_{m + 1}$ as $j \to \infty$. An update occurs at $\tau_j^m$ and $\ell_j\left(\tau_j^m\right) = l_m$ for all large enough $j \in \calJ$, thus
		\begin{equation}
			\label{ch2-aux: update conditions}
			\bq(\tau, l_m) = \alpha \quad \text{or} \quad \bq(\tau, h_m) = 1.
		\end{equation}
		
		Suppose that $\bq(\tau, l_m) = \alpha$. It follows from \eqref{ch2-eq: approximation by finite updates systems} that (i) holds for $\bq^m$. Applying Lemma \ref{ch2-lem: induction step} in the interval $[\tau_m, \tau]$ to $\bq^m$, we conclude that
		\begin{equation}
			\label{ch2-aux: no update 1}
			\bq(\tau, l_m - 1) = \bq^m(\tau, l_m - 1) > \alpha \quad \text{and} \quad \bq(\tau, l_m) = \alpha < 1.
		\end{equation}
		Arguing as when we proved the strict inequality in \eqref{ch2-aux: limit of update times}, we obtain
		\begin{equation*}
			\zeta \defeq \liminf_{j \to \infty} \tau_j^{m + 2} > \tau.
		\end{equation*}
		Using Lemma \ref{ch2-lem: derivatives} and Remark \ref{ch2-rem: fluid limit when the threshold is locally constant}, we conclude that $\bq$ satisfies \eqref{ch2-aux: fluid dynamics} with $l_j = l_m - 1$ almost everywhere on $[\tau, \zeta)$. Because \eqref{ch2-aux: fluid dynamics} holds almost everywhere on $[\tau, \zeta)$ for $\bq^m$, but with $l_j = l_m$, we arrive to the following contradiction: $\bq$ and $\bq^m$ cannot coincide in any right-neighborhood of $\tau < \tau_{m + 1}$.
		
		Suppose that $\bq(\tau, h_m) = 1$. Then we conclude from (ii) that
		\begin{equation}
			\label{ch2-aux: no update 2}
			\bq(\tau, h_m) = 1 > \alpha \quad \text{and} \quad \bq(\tau, h_m + 1) \leq \bq(\tau_m, h_m + 1) \leq \bq(\tau_m, h_m) < 1.
		\end{equation}
		Arguing as in the previous case, where $\bq(\tau, l_m) = \alpha$, we reach the same contradiction.
		
		We have proved that $\tau \geq \tau_{m + 1}$, which completes the proof of \eqref{ch2-aux: limit of update times}, and also of the entire lemma in the case where $\tau_{m + 1} = \infty$. Hence, we assume in the sequel that $\tau_{m + 1} < \infty$. To complete the proof of \eqref{ch2-aux: limit of update times}, we establish that
		\begin{equation*}
			\limsup_{k \to \infty} \tau_k^{m + 1} \leq \tau_{m + 1}.
		\end{equation*}
		If this did not hold, then it would be possible to find $\varepsilon > 0$ and a subsequence $\calJ \subset \calK$ such that $\tau_j^{m + 1} \geq \tau_{m + 1} + \varepsilon$ for all $j \in \calJ$. But this implies that
		\begin{equation*}
			\bq^m(t) = \lim_{j \to \infty} \bq_j^m(t) = \lim_{j \to \infty} \bq_j(t) = \bq(t) \quad \text{for all} \quad t \in [0, \tau_{m + 1} + \varepsilon),
		\end{equation*}
		which contradicts the definition of $\tau_{m + 1}$. Thus, \eqref{ch2-aux: limit of update times} holds.
		
		The fact that $\bq\left(\tau_{m + 1}, l_m\right) = \alpha$ or $\bq\left(\tau_{m + 1}, h_m\right) = 1$ can be established in the same way as \eqref{ch2-aux: update conditions}, and \eqref{ch2-eq: ell m + 1} is a direct consequence of how the threshold is updated. Also, \eqref{ch2-eq: last part of c} is proved in the same way as \eqref{ch2-aux: no update 1} and \eqref{ch2-aux: no update 2}.
	\end{proof}
	
	We are now ready to prove Theorem \ref{ch2-the: dynamic fluid limit}.
	
	\begin{proof}[Proof of Theorem \ref{ch2-the: dynamic fluid limit}.]
		We fix $\omega \in \Gamma$ and we omit it from the notation for brevity. Using Proposition \ref{ch2-prop: simultaneous relative compactness} and a diagonal argument, we conclude that every sequence of natural numbers has a subsequence $\calK$ such that the sequences $\set{\bq_k^m}{k \in \calK}$ with $m \geq 0$ and $\set{\bq_k}{k \in \calK}$ converge uniformly over compact sets to functions with values in $Q$ and locally Lipschitz coordinate functions. By \eqref{ch2-ass: convergence of initial threshold} and \eqref{ch2-ass: nice initial condition}, Lemma \ref{ch2-lem: induction step} holds with $m = 0$. Applying it recursively, we obtain a sequence of strictly increasing times $\set{\tau_j \geq 0}{0 \leq j < \eta}$ and threshold values $\set{l_j \in \N}{0 \leq j < \eta}$, with a possibly infinite $\eta$, such that (a) and (c) of Definition \ref{ch2-def: fluid system} hold for the limit $\bq$ of $\bq_k$.
		
		Next we prove that (b) of Definition \ref{ch2-def: fluid system} also holds. This implies that the latter sequences and $\bq$ form a fluid system. The fluid system has $\tau_\eta = \infty$ by Proposition~\ref{ch2-prop: infinitely many updates cannot occur in finite time}. Also, Lemma \ref{ch2-lem: induction step} implies that the threshold processes $\set{\ell_k}{k \in \calK}$ converge in $S[0, \infty)$ to the threshold of the fluid system, which is defined as
		\begin{equation*}
			\ell(t) \defeq l_j \quad \text{for all} \quad t \in \left[\tau_j, \tau_{j + 1}\right) \quad \text{and} \quad 0 \leq j < \eta.
		\end{equation*}
		
		The fact that $\bq\left(t, \ell(t)\right) \geq \alpha$ for all $t \in [0, \tau_\eta)$ can be established in the same way as we proved property (i) in the proof of Lemma \ref{ch2-lem: induction step}. As in the proof of (ii), we conclude that all tasks are sent to server pools with at most $l_j$ tasks in the interval $(\tau_k^j, \tau_k^{j + 1})$ for all large enough $k \in \calK$; where $\tau_k^{j + 1} = \infty$ if $\eta < \infty$ and $j = \eta- 1$. As in the proof of Lemma \ref{ch2-lem: derivatives}, we may show that $p_i\left(\bq, l_j\right) = 0$ for all $i \geq h_j + 1$ almost everywhere on the interval $\left(\tau_j, \tau_{j + 1}\right)$. This implies that $\bq\left(t, h(t)\right) < 1$ almost everywhere on $[0, \tau_\eta)$, and hence property (b) of Definition \ref{ch2-def: fluid system} holds.
	\end{proof}
	
\end{appendices}
	
\newcommand{\noop}[1]{}
\bibliographystyle{IEEEtranS}
\bibliography{IEEEabrv,bibliography}
	
\end{document}